\documentclass[journal,twocolumn]{IEEEtran}

\usepackage{amsmath,amssymb,amsfonts,mathtools,bm}
\usepackage{graphicx}
\usepackage{subcaption}
\usepackage[caption=false,font=normalsize,labelfont=sf,textfont=sf]{subfig}
\usepackage{stfloats}
\usepackage{array,booktabs,multirow}
\usepackage{algorithm}
\usepackage{algorithmic}
\usepackage{textcomp}
\usepackage{url}
\usepackage{tikz}
\usetikzlibrary{arrows.meta,positioning}
\usepackage{cite}
\usepackage{verbatim}
\usepackage{xcolor}

\AtBeginDocument{%
  \setlength{\abovedisplayskip}{2pt plus 1pt minus 1pt}%
  \setlength{\belowdisplayskip}{2pt plus 1pt minus 1pt}%
  \setlength{\abovedisplayshortskip}{1pt plus 1pt minus 1pt}%
  \setlength{\belowdisplayshortskip}{1pt plus 1pt minus 1pt}%
  \setlength{\jot}{2pt}%
  \setlength{\topsep}{1pt plus 1pt minus 1pt}%
  \setlength{\partopsep}{0pt}%
  \setlength{\itemsep}{1pt}%
  \setlength{\parsep}{0pt}%
}
\usepackage{amsthm}

\newtheorem{theorem}{Theorem}
\newtheorem{lemma}{Lemma}

\newtheorem{corollary}{Corollary}

\theoremstyle{remark}
\newtheorem{remark}{Remark}

\begin{document}

\title{ Real-Time Reconstruction of Markov Sources over MPR Channels}

\author{Pansee~S.~Elessawy and Nikolaos~Pappas
\thanks{This work was supported by ELLIIT. A shorter version of this paper has been accepted for presentation at the 2026 IEEE Information Theory Workshop (ITW 2026).}
\thanks{The authors are with the Department of Computer and Information
Science, Link\"oping University, 581~83 Link\"oping, Sweden
(e-mail: pansee.elessawy@liu.se; nikolaos.pappas@liu.se).}}

\markboth{IEEE Transactions on Communications}%
{Elessawy \MakeLowercase{\textit{and}} Pappas: Real-Time Reconstruction and Actuation Error over MPR Channels}

\maketitle

\begin{abstract}
This paper studies the real-time reconstruction and remote
actuation of two binary Markov sources over a shared wireless
channel with multi-packet reception (MPR). Unlike many existing collision-based formulations that discard simultaneous transmissions, we exploit MPR and evaluate communication through reconstruction and actuation errors rather than raw delivery rates. We consider two sensors observing the sources and aim to find sampling policies that minimize the weighted real-time reconstruction error (RTE), or equivalently the weighted cost of actuation error (CAE) for the considered binary sources, under per-sensor sampling constraints.
We first obtain closed-form expressions for the RTE and CAE in terms of the effective update probabilities. When the sensors randomize independently, the MPR-induced update-rate map becomes bilinear, and the constrained optimization is nonconvex. We exploit the geometry of the achievable update-rate region to show that the search over Pareto-efficient independently randomized policies reduces to a finite set of one-dimensional boundary-branch searches with closed-form candidates. As a benchmark, we allow time sharing among joint sensor actions, and we prove that at most two Pareto-extreme modes are sufficient, and use this benchmark to quantify the loss caused by independent randomization.
Numerical results reveal that MPR capability alone is not sufficient; simultaneous decoding improves the task-oriented objective only when concurrent reception is reliable for both sources; otherwise, policies that avoid simultaneous transmissions can be equally effective.

\end{abstract}

\begin{IEEEkeywords}
Goal-oriented and semantics-aware communication, real-time
reconstruction, cost of actuation error, Markov sources, multi-packet reception.
\end{IEEEkeywords}

\vspace{-10pt}
\section{Introduction}

\IEEEPARstart{R}{eal-time} monitoring, tracking, and remote
actuation are central to emerging cyber-physical and autonomous
systems, including industrial automation, swarm robotics,
autonomous transportation, and networked
control~\cite{salimnejad2024realtime, pappas2021goal,
gunduz2023beyond}. In such systems, sensing devices observe a
time-varying source and transmit status updates to a remote
receiver that reconstructs the state and drives an actuation
process. As these systems scale, the generated data grows rapidly
while bandwidth, channel-access opportunities, and energy remain
limited~\cite{luo2025semantic}, making it
inefficient, and often infeasible, to transmit every observation.
The key question is therefore not how to deliver \emph{more} data,
but how to deliver the \emph{right} data at the right time, so that
the reconstruction and the actuation remain effective.

Traditional communication design is largely content-agnostic,
treating all packets as equally important and optimizing rate- or
delay-oriented metrics. This is not suitable for task-oriented
systems, where the value of an update depends on what it conveys
and on how it affects the receiver's
decisions~\cite{ gunduz2023beyond,
9994683}. The \emph{semantics-aware} and \emph{goal-oriented}
paradigm addresses this gap by jointly designing the generation,
transmission, and utilization of information according to its
significance for the underlying
task~\cite{10579545,popovski2020semantic,CALVANESESTRINATI2021107930,9919752,10479470}.
Within this paradigm, several semantic metrics quantify timeliness
and relevance. The Age of Information
(AoI)~\cite{KostaPappasAngelakis2017,YatesSunBrownKaulModianoUlukus2021,8000687}
measures freshness but is agnostic to content; the Age of Incorrect
Information (AoII)~\cite{maatouk2020aoii,maatouk2023aoii,11071330,9162726,10620879}
couples staleness with estimation error; and the Version Age of
Information (VAoI)~\cite{yates2021gossip} counts how many versions
the receiver lags behind the source. State- and context-aware
extensions~\cite{stamatakis2019control, zhou2020urgency}. A common limitation,
however, is that these metrics treat all reconstruction errors as
equally costly, which is rarely the case in control and actuation.

To capture the state-dependent consequences of erroneous
actuation, the CAE was introduced in~\cite{pappas2021goal} and
subsequently adopted~\cite{salimnejad2024realtime,
salimnejad2023stateaware} as a task-oriented distortion measure
that assigns possibly asymmetric, non-commutative costs to the
mismatch states between the source and its reconstruction.
Together with the related real-time reconstruction error, it
directly reflects the impact of stale or incorrect information on
the receiver's decisions, and is thus a natural objective for
remote tracking for actuation.

Semantics-aware joint sampling and transmission policies for
real-time tracking of Markov 
sources~\cite{salimnejad2024realtime, salimnejad2023stateaware}
show that change-aware policies can outperform
content-agnostic ones, and that transmitting fewer but more
significant updates can be optimal. Fundamental limits of remote
estimation under communication constraints were studied
in~\cite{ChakravortyMahajan2017,8812616}.

The multi-source setting,
in which an agent must schedule updates from several Markov 
sources over a constrained channel, has been
addressed in~\cite{luo2025semantic,11450399,10547339,
9518209} through constrained Markov  decision process (CMDP) formulations, structural analyses of the optimal policy, and low-complexity online algorithms. Pull-based remote tracking with correlated or partial observations has been studied in
~\cite{ZakeriMoltafetCodreanu2025,11195539,9676636}.

AoI has been analyzed in multiple-access systems\cite{
8006544} with multi-packet reception (MPR) \cite{9365698}, and unreliable wireless networks with scheduling for weighted AoI minimization \cite{KadotaSinhaUysalSinghModiano2018,8734015}. Recent works have studied status updating over ALOHA-based random access channels \cite{9358219,mono}. The works ~\cite{10949089,dlr224073} studied two-state Markov  source monitoring including joint model estimation with unknown source statistics and analytical renewal--reward characterizations under random and reactive access policies.
Multi-source AoI over shared queues was characterized in~\cite{8469047}. 

In this paper, we study two independent binary Markov  sources that share a wireless multiple-access channel with MPR. Unlike collision-based access, MPR allows the receiver to decode more than one simultaneous transmission with nonzero probability. The update probability of each source depends not only on its own sampling decisions, but also on the sampling decisions of the other sensor. This coupling creates an interaction between physical-layer reception and task-oriented reconstruction and actuation performance.
\textit{While prior work has mainly studied goal-oriented sampling over point-to-point links or AoI in random access, the role of MPR in real-time reconstruction of multiple sources and actuation has not been thoroughly investigated}. The main contributions are as follows.
\begin{itemize}
  \item We derive closed-form expressions for the steady-state RTE and
  CAE of each binary Markov  source under synchronize-or-hold estimation.

  \item We formulate the sampling-constrained independent MPR
  optimization problem. We then derive a budget-aware
  total-success envelope that separates dominant-transmitter and
  cooperative MPR operating regimes.

  \item We establish a Pareto-reduction result for independent policies:
  a Pareto-efficient policy does not require both sensors to randomize
  between the two sources. This reduces the policy search to a finite set
  of one-dimensional boundary-branch problems with closed-form candidates.

  \item We introduce a coordinated MPR time-sharing benchmark and prove that an optimal coordinated policy requires time sharing between at most two Pareto-extreme joint-action modes. This benchmark quantifies the performance loss caused by independent randomization.

  \item Numerical results compare collision, capture, and MPR channels, showing that MPR is useful only when simultaneous reception is reliable for both sources.
\end{itemize}

\vspace{-11px}

\section{System Model and Task-Oriented Metrics}
\label{sec:system_model}

We consider a time-slotted remote monitoring system with two sensors, two independent binary Markov  sources, and a common receiver communicating over a shared wireless channel with multi-packet reception (MPR) as shown in Fig.~\ref{fig:one}. Time is indexed by $t\in\{0,1,2,\ldots\}$. The state of source $i\in\{1,2\}$ at slot $t$ is denoted by $X_i(t)\in\{0,1\}$ and evolves according to the transition matrix
\begin{equation}
\mathbf{P}_i =
\begin{bmatrix}
1-\alpha_i & \alpha_i \\
\beta_i & 1-\beta_i
\end{bmatrix},
\qquad \alpha_i,\beta_i\in(0,1),
\label{eq:sys_Pi}
\end{equation}
where $\alpha_i$ and $\beta_i$ are the transition probabilities from state $0$ to state $1$ and from state $1$ to state $0$, respectively. We define the source correlation parameter
\begin{equation}
\lambda_i \triangleq 1-\alpha_i-\beta_i,
\qquad i\in\{1,2\}.
\label{eq:lambda_def}
\end{equation}
Thus, $\lambda_i>0$ corresponds to positively correlated source evolution, $\lambda_i=0$ to memoryless evolution, and $\lambda_i<0$ to negatively correlated evolution.

At each slot, sensor $k\in\{1,2\}$ selects an action $a_k(t)\in\{0,1,2\}$, where $a_k(t)=0$ denotes silence, while $a_k(t)=i$ means that sensor $k$ samples and transmits the current state of source $X_i(t)$. We focus on stationary randomized sampling policies, where
\begin{equation}
\Pr\{a_k(t)=j\}=a_{k,j},
\qquad j\in\{0,1,2\},
\label{eq:sys_random_policy_prob}
\end{equation}
with $a_{k,0}+a_{k,1}+a_{k,2}=1$ and $a_{k,j}\geq 0$.
The policy of sensor $k$ is denoted by $\mathbf a_k \triangleq (a_{k,0},a_{k,1},a_{k,2})$,
and the joint action by $\mathbf a(t)\triangleq(a_1(t),a_2(t))$. The randomized decisions are independent across sensors and slots and are not conditioned on the current source values. Let $Z_k(t)\in\{0,1\}$ denote the decoding outcome of sensor $k$, where $Z_k(t)=1$ if the packet from sensor $k$ is successfully decoded and $Z_k(t)=0$ otherwise. The MPR channel is described by four success probabilities. When only sensor $1$ transmits, its packet is decoded with probability $p_{1/1}$; when only sensor $2$ transmits, its packet is decoded with probability $p_{2/2}$. When both sensors transmit simultaneously, the packet of sensor $1$ is decoded with probability $p_{1/1,2}$ and the packet of sensor $2$ is decoded with probability $p_{2/2,1}$. Naturally, we consider
\begin{equation}
p_{1/1}\geq p_{1/1,2},
\qquad
p_{2/2}\geq p_{2/2,1}.
\label{eq:mpr_ordering}
\end{equation}
The MPR probabilities provide a physical-layer abstraction of the wireless reception process and may depend, in a specific channel model, on fading statistics, path loss, interference, and decoding thresholds~\cite{naware2005stability}.

\begin{figure*}[t]
    \centering
    \resizebox{\textwidth}{!}{%
    \begin{tikzpicture}[
        >=Stealth,
        font=\small,
        sourcebox/.style={
            draw, thick, fill=gray!2,
            minimum width=4.5cm,
            minimum height=5.8cm,
            rounded corners=2pt
        },
        innerbox/.style={
            draw, fill=white,
            minimum width=4.15cm,
            minimum height=2.6cm,
            align=center
        },
        sensor/.style={
            circle, draw, thick, fill=green!8,
            minimum size=1.5cm, align=center
        },
        mpr/.style={
            circle, draw, thick, fill=blue!5,
            minimum size=1.5cm, align=center
        },
        receiver/.style={
            circle, draw, thick,
            minimum size=1.5cm, align=center
        },
        estbox/.style={
            rectangle, draw, thick, fill=cyan!5,
            minimum width=2.15cm,
            minimum height=1.15cm,
            align=center
        },
        state/.style={circle, draw, minimum size=0.45cm, font=\scriptsize}
    ]
        \coordinate (S1_y) at (0, 1.45);
        \coordinate (S2_y) at (0, -1.45);
        \def\SensorX{5.8}
        \def\ChannelX{8.1}
        \def\ReceiverX{10.4}
        \def\EstimateX{12.9}
        \node[sourcebox] (outer) at (0,0) {};
        \node[anchor=north, yshift=-8pt, font=\bfseries] at (outer.north) {INFORMATION SOURCES};
\node[innerbox, fill=gray!15] (box1) at (S1_y) {
    \textbf{SOURCE $X_1(t)$} \\
      [0.2cm]
    \begin{tikzpicture}[baseline, nodes={state}, xscale=1.2]
        \node (s0) at (0,0) {0};
        \node (s1) at (1,0) {1};
        \path[-{Stealth}, font=\scriptsize]
            (s0) edge[loop left]  
                node[draw=none, fill=none, shape=rectangle] {$(1-\alpha_1)$} (s0)
            (s0) edge[bend left]  
                node[above, draw=none, fill=none, shape=rectangle] {$\alpha_1$} (s1)
            (s1) edge[bend left]  
                node[below, draw=none, fill=none, shape=rectangle] {$\beta_1$} (s0)
            (s1) edge[loop right] 
                node[draw=none, fill=none, shape=rectangle] {$(1-\beta_1)$} (s1);
    \end{tikzpicture} \\[0.1cm]
    \footnotesize Transition Matrix $\mathbf{P}_1$
};
\node[innerbox, fill=gray!15] (box2) at (S2_y) {
    \textbf{SOURCE $X_2(t)$} \\
[0.2cm]
    \begin{tikzpicture}[baseline, nodes={state}, xscale=1.2]
        \node (s0) at (0,0) {0};
        \node (s1) at (1,0) {1};
        \path[-{Stealth}, font=\scriptsize]
            (s0) edge[loop left]  
                node[draw=none, fill=none, shape=rectangle] {$(1-\alpha_2)$} (s0)
            (s0) edge[bend left]  
                node[above, draw=none, fill=none, shape=rectangle] {$\alpha_2$} (s1)
            (s1) edge[bend left]  
                node[below, draw=none, fill=none, shape=rectangle] {$\beta_2$} (s0)
            (s1) edge[loop right] 
                node[draw=none, fill=none, shape=rectangle] {$(1-\beta_2)$} (s1);
    \end{tikzpicture} \\[0.1cm]
    \footnotesize Transition Matrix $\mathbf{P}_2$
};
        \node[sensor] (sen1) at (\SensorX, 1.45) {\textbf{SENSOR 1} \\ \footnotesize policy $\mathbf{a}_1$};
        \node[sensor] (sen2) at (\SensorX, -1.45) {\textbf{SENSOR 2} \\ \footnotesize policy $\mathbf{a}_2$};
        \node[mpr] (mpr) at (\ChannelX, 0) {\textbf{MPR} \\ \footnotesize CHANNEL};
        \node[receiver] (rec) at (\ReceiverX, 0) {\textbf{RECEIVER} \\ $R$};
        \node[estbox] (hat1) at (\EstimateX, 1.2) {\textbf{$\hat{X}_1(t)$} \\ \scriptsize Estimate of $X_1$};
        \node[estbox] (hat2) at (\EstimateX, -1.2) {\textbf{$\hat{X}_2(t)$} \\ \scriptsize Estimate of $X_2$};
        \coordinate (exit1) at (box1.east);
        \coordinate (exit2) at (box2.east);
        \draw[-{Stealth[scale=1.1]}, thick] (exit1) -- (sen1.west)
            node[midway, above] {$a_{1,1}$};
        \draw[-{Stealth[scale=1.1]}, thick] (exit1) -- (sen2.west)
            node[pos=0.7, above right] {$a_{2,1}$};
        \draw[-{Stealth[scale=1.1]}, thick] (exit2) -- (sen1.west)
            node[pos=0.7, below right] {$a_{1,2}$};
        \draw[-{Stealth[scale=1.1]}, thick] (exit2) -- (sen2.west)
            node[midway, below] {$a_{2,2}$};
        \draw[->, ultra thick] (sen1.east) -- (mpr.145);
        \draw[->, ultra thick] (sen2.east) -- (mpr.215);
        \draw[->, ultra thick, dashed] (mpr) -- (rec);
        \draw[->, ultra thick] (rec.35) -- (hat1.west)
            node[midway, left, align=center, xshift=4pt, yshift=12pt]
            {\scriptsize $U_1(t)=1$ \\ \scriptsize (Success)};
        \draw[->, ultra thick] (rec.325) -- (hat2.west)
            node[midway, left, align=center, xshift=-4pt, yshift=-12pt]
            {\scriptsize $U_2(t)=1$ \\ \scriptsize (Success)};
\node[
    draw,
    fill=white,
    inner sep=2.5pt,
    xshift=6mm
] at (\ChannelX, 2.55) {
    \begin{tabular}{c}
        \scriptsize MPR Probabilities: \\
        \scriptsize $p_{1/1},\,p_{2/2},\,p_{1/1,2},\,p_{2/2,1}$
    \end{tabular}
};
    \end{tikzpicture}%
    }
    \caption{System model for status updates from independent binary Markov  sources through an MPR-capable channel.}
    \label{fig:one}
\end{figure*}
For compactness, we introduce the channel notation
\begin{equation}
s_1\triangleq p_{1/1},\qquad
s_2\triangleq p_{2/2},\qquad
a\triangleq p_{1/1,2},\qquad
b\triangleq p_{2/2,1},
\label{eq:channel_notation}
\end{equation}

let \(P_{0/\{1,2\}}\) denote the probability that no packet is decoded when sensors \(1\) and \(2\) transmit simultaneously. When both sensors transmit packets carrying information from the same source, a successful update occurs if at least one packet is decoded. Hence, we define
\begin{equation}
c \triangleq 1-P_{0/\{1,2\}} .
\label{eq:c_def}
\end{equation}

The receiver maintains an estimate $\hat X_i(t)\in\{0,1\}$ of each source. A successful update for source $i$ occurs if at least one successfully decoded packet carries the current value $X_i(t)$. Hence,
\begin{equation}
\begin{aligned}
U_i(t)\triangleq
\mathbf{1}\Big\{&
(a_1(t)=i, Z_1(t)=1)
\ \text{or}\
(a_2(t)=i, Z_2(t)=1)
\Big\},
\\
& i\in\{1,2\}.
\end{aligned}
\label{eq:sys_Ui}
\end{equation}
The receiver follows a synchronize-or-hold estimator:
\begin{equation}
\hat X_i(t)=
\begin{cases}
X_i(t), & U_i(t)=1,\\
\hat X_i(t-1), & U_i(t)=0,
\end{cases}
\qquad i\in\{1,2\}.
\label{eq:sys_sync_or_hold}
\end{equation}
The effective update probability of source $i$ is $q_i\triangleq \Pr\{U_i(t)=1\}$, $i\in\{1,2\}$.

Under the independent stationary randomized policy, the effective update probabilities are obtained by summing over the joint actions that deliver the corresponding source, giving
\begin{align}
q_1
&=
s_1 a_{1,1}a_{2,0}
+
s_2 a_{1,0}a_{2,1}
+
c a_{1,1}a_{2,1}
\nonumber\\
&\quad
+
a a_{1,1}a_{2,2}
+
b a_{1,2}a_{2,1}.
\label{eq:q1_explicit_journal}
\end{align}
Similarly, the effective update probability of source $X_2$ is
\begin{align}
q_2
&=
s_1 a_{1,2}a_{2,0}
+
s_2 a_{1,0}a_{2,2}
+
c a_{1,2}a_{2,2}
\nonumber\\
&\quad
+
a a_{1,2}a_{2,1}
+
b a_{1,1}a_{2,2}.
\label{eq:q2_explicit_journal}
\end{align}

\subsection{Real-Time Reconstruction Error}
\label{subsec:rte_journal}

We characterize the reconstruction quality of each source by studying the joint process $Y_i(t)\triangleq (X_i(t),\hat X_i(t))$, whose state space is $\mathcal S=\{(0,0),(0,1),(1,0),(1,1)\}$.
We order the states as $s_1=(0,0)$, $s_2=(0,1)$, $s_3=(1,0)$, $s_4=(1,1)$.
Within each slot, the source first evolves, and then a successfully decoded update synchronizes the receiver estimate with the current source value. If no update is decoded, the receiver keeps its previous estimate. Under this convention, $Y_i(t)$ is a four-state Markov  chain with transition matrix $\mathbf T_i$, whose entries are
\begin{equation}
T_i(k,\ell)
=
\Pr\{Y_i(t+1)=s_\ell\mid Y_i(t)=s_k\}.
\label{eq:T_entry_journal}
\end{equation}
Using \eqref{eq:sys_Pi} and \eqref{eq:sys_sync_or_hold}, direct enumeration gives
\begin{equation}
\mathbf{T}_i =
\setlength{\arraycolsep}{2.0pt}
\renewcommand{\arraystretch}{1.15}
\small
\begin{bmatrix}
1-\alpha_i & 0 & \alpha_i(1-q_i) & \alpha_i q_i \\
q_i(1-\alpha_i) & (1-\alpha_i)(1-q_i) & 0 & \alpha_i \\
\beta_i & 0 & (1-\beta_i)(1-q_i) & q_i(1-\beta_i) \\
\beta_i q_i & \beta_i(1-q_i) & 0 & 1-\beta_i
\end{bmatrix}.
\label{eq:T_matrix_journal}
\end{equation}
For example, starting from $(0,0)$, the process moves to $(1,0)$ with probability $\alpha_i(1-q_i)$ when the source changes from $0$ to $1$ but no update is received, and to $(1,1)$ with probability $\alpha_i q_i$ when the source change is followed by a successful update.

Let $\boldsymbol{\pi}_i=\big[\pi_i(0,0),\,\pi_i(0,1),\,\pi_i(1,0),\,\pi_i(1,1)\big]$
denote the stationary distribution of $Y_i(t)$. Then
\begin{equation}
\boldsymbol{\pi}_i=\boldsymbol{\pi}_i\mathbf T_i,
\qquad
\sum_{(x,\hat x)\in\mathcal S}\pi_i(x,\hat x)=1.
\label{eq:stationary_equations_journal}
\end{equation}
Following the real-time reconstruction metric in~\cite{salimnejad2024realtime}, the instantaneous reconstruction error of source $i$ is
\begin{equation}
E_i(t)\triangleq \mathbf 1\{X_i(t)\neq \hat X_i(t)\}.
\label{eq:instantaneous_rte_journal}
\end{equation}
The steady-state real-time reconstruction error (RTE) is therefore
\begin{equation}
E_i
\triangleq
\Pr\{X_i(t)\neq \hat X_i(t)\}
=
\pi_i(0,1)+\pi_i(1,0).
\label{eq:rte_def_journal}
\end{equation}

\begin{theorem}[Closed-form RTE]
\label{thm:closed_form_rte}
For source $i\in\{1,2\}$ and $q_i>0$, the two mismatch states have equal stationary probabilities,
\begin{equation}
\pi_i(0,1)=\pi_i(1,0)\triangleq \zeta_i,
\label{eq:mismatch_equality_journal}
\end{equation}
where
\begin{equation}
\zeta_i
=
\frac{\alpha_i\beta_i(1-q_i)}
{(\alpha_i+\beta_i)\big[(\alpha_i+\beta_i)-q_i(\alpha_i+\beta_i-1)\big]}.
\label{eq:zeta_closed_form_journal}
\end{equation}
Consequently, the steady-state RTE using $\lambda_i=1-\alpha_i-\beta_i$,
\begin{equation}
E_i(q_i)
=
\frac{2\alpha_i\beta_i(1-q_i)}
{(1-\lambda_i)\big[1-\lambda_i(1-q_i)\big]}.
\label{eq:Ei_closed_form_lambda_journal}
\end{equation}
\end{theorem}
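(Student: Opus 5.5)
The plan is to avoid solving the full four-dimensional system $\boldsymbol{\pi}_i=\boldsymbol{\pi}_i\mathbf T_i$ directly. Instead I will use the fact that the first coordinate of $Y_i(t)$ is the source chain itself. That gives the marginals cheaply, and then only the two balance equations of the mismatch states (columns 2 and 3 of $\mathbf T_i$ in \eqref{eq:T_matrix_journal}) need to be handled.

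\emph{Well-posedness.} First I will check that for $q_i>0$ and $\alpha_i,\beta_i\in(0,1)$ the chain $Y_i(t)$ is irreducible and aperiodic on $\mathcal S$, so the stationary distribution is unique.
\begin{itemize}
\item Irreducibility: from $(0,1)$ the chain reaches $(0,0)$ with probability $q_i(1-\alpha_i)>0$. From $(0,0)$ it reaches $(1,0)$ and $(1,1)$. From $(1,1)$ it reaches $(0,1)$ with probability $\beta_i(1-q_i)$. If $q_i=1$, the states $(0,1)$ and $(1,0)$ are transient and the formula gives $\zeta_i=0$, consistent with that case.
\item Aperiodicity: the self-loop at $(0,0)$ has probability $1-\alpha_i>0$.
\end{itemize}

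\emph{Marginals.} Since $X_i(t)$ evolves autonomously according to $\mathbf P_i$, its stationary law gives
\[
\pi_i(0,0)+\pi_i(0,1)=\frac{\beta_i}{\alpha_i+\beta_i},\qquad \pi_i(1,0)+\pi_i(1,1)=\frac{\alpha_i}{\alpha_i+\beta_i}.
\]
Write $x=\pi_i(0,1)$ and $y=\pi_i(1,0)$. The balance equations for columns 2 and 3 of $\mathbf T_i$, after substituting $\pi_i(1,1)=\alpha_i/(\alpha_i+\beta_i)-y$ and $\pi_i(0,0)=\beta_i/(\alpha_i+\beta_i)-x$, become
\begin{align*}
x\,[1-(1-\alpha_i)(1-q_i)] &= \beta_i(1-q_i)\Big(\tfrac{\alpha_i}{\alpha_i+\beta_i}-y\Big),\\
y\,[1-(1-\beta_i)(1-q_i)] &= \alpha_i(1-q_i)\Big(\tfrac{\beta_i}{\alpha_i+\beta_i}-x\Big).
\end{align*}

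\emph{Equality of the mismatch states.} Subtracting the second equation from the first, the constant terms $\alpha_i\beta_i(1-q_i)/(\alpha_i+\beta_i)$ cancel. The terms $(1-q_i)(\alpha_i x-\beta_i y)$ appear on both sides and also cancel, leaving $q_i(x-y)=0$. Hence $x=y$ whenever $q_i>0$, which is \eqref{eq:mismatch_equality_journal}.

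\emph{Closed form.} Substituting $y=x$ into the first equation gives
\[
x\,\big[1-(1-q_i)(1-\alpha_i-\beta_i)\big]=\frac{\alpha_i\beta_i(1-q_i)}{\alpha_i+\beta_i}.
\]
The bracket equals $(\alpha_i+\beta_i)-q_i(\alpha_i+\beta_i-1)$, which yields \eqref{eq:zeta_closed_form_journal}. Writing the bracket as $1-\lambda_i(1-q_i)$, using $\alpha_i+\beta_i=1-\lambda_i$, and taking $E_i=2\zeta_i$ from \eqref{eq:rte_def_journal} gives \eqref{eq:Ei_closed_form_lambda_journal}.

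The only delicate step is the subtraction that produces $x=y$, since the cancellation must be tracked carefully. I will also verify that the denominator is positive: $1-\lambda_i(1-q_i)\ge 1-|\lambda_i|>0$ because $|\lambda_i|<1$. Finally, I will confirm that the remaining balance equations (columns 1 and 4) hold automatically, which follows from the marginal constraints and normalization.
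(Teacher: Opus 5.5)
Your proposal is correct and follows essentially the same route as the paper. Both arguments use the autonomous source marginals to eliminate $\pi_i(0,0)$ and $\pi_i(1,1)$ from the two mismatch-state balance equations, subtract them to obtain $q_i(\pi_i(0,1)-\pi_i(1,0))=0$, and back-substitute. Your added irreducibility check, which the paper omits, usefully justifies uniqueness of $\boldsymbol{\pi}_i$. Aperiodicity is not needed for that uniqueness, and checking columns 1 and 4 is harmless but unnecessary once a stationary distribution is known to exist and be unique.
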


\begin{IEEEproof}
Let
\begin{equation*}
\begin{aligned}
\pi_{00}&\triangleq \pi_i(0,0), &\quad
\pi_{01}&\triangleq \pi_i(0,1),\\
\pi_{10}&\triangleq \pi_i(1,0), &\quad
\pi_{11}&\triangleq \pi_i(1,1).
\end{aligned}
\end{equation*}
Since the receiver estimate does not affect the source evolution, the marginal stationary distribution of $X_i(t)$ is the stationary distribution of the binary Markov  chain in \eqref{eq:sys_Pi}. Hence,
\begin{equation}
\pi_{00}+\pi_{01}=\frac{\beta_i}{\alpha_i+\beta_i},
\qquad
\pi_{10}+\pi_{11}=\frac{\alpha_i}{\alpha_i+\beta_i}.
\label{eq:source_marginals_journal}
\end{equation}
From the stationary equations associated with the mismatch states, we have
\begin{equation}
\pi_{01}
=
(1-q_i)\big[(1-\alpha_i)\pi_{01}+\beta_i\pi_{11}\big],
\label{eq:balance_01_journal}
\end{equation}
and
\begin{equation}
\pi_{10}
=
(1-q_i)\big[\alpha_i\pi_{00}+(1-\beta_i)\pi_{10}\big].
\label{eq:balance_10_journal}
\end{equation}
Using \eqref{eq:source_marginals_journal} to eliminate $\pi_{00}$ and $\pi_{11}$ in \eqref{eq:balance_01_journal}--\eqref{eq:balance_10_journal} and subtracting the resulting equations, gives
\begin{equation}
q_i(\pi_{01}-\pi_{10})=0.
\end{equation}
Since $q_i>0$, it follows that $\pi_{01}=\pi_{10}\triangleq \zeta_i$. Substituting this equality back into either balance equation gives \eqref{eq:zeta_closed_form_journal}. Finally, since $E_i=\pi_{01}+\pi_{10}=2\zeta_i$, we obtain by rewriting the denominator using $\lambda_i=1-\alpha_i-\beta_i$ which gives \eqref{eq:Ei_closed_form_lambda_journal}.
\end{IEEEproof}

It is useful for the optimization analysis to introduce
\begin{equation}
K_i\triangleq \frac{2\alpha_i\beta_i}{1-\lambda_i},
\qquad
D_i(q_i)\triangleq 1-\lambda_i(1-q_i).
\label{eq:Ki_Di_def_journal}
\end{equation}
Then the RTE can be written compactly as
\begin{equation}
E_i(q_i)=K_i\frac{1-q_i}{D_i(q_i)}.
\label{eq:Ei_compact_journal}
\end{equation}

\begin{remark}[Zero-update-rate endpoint]
\label{rem:zero_update_endpoint}
The stationary characterization in
Theorem~\ref{thm:closed_form_rte} applies to $q_i>0$.
When $q_i=0$, no update is ever received, the estimate remains
frozen at its initial value, and the joint chain $Y_i(t)$ is
reducible. Consequently, the time-average error depends on the
initial estimate: it equals
$\alpha_i/(\alpha_i+\beta_i)$ if $\hat X_i(0)=0$, and
$\beta_i/(\alpha_i+\beta_i)$ if $\hat X_i(0)=1$.

For the optimization analysis, we define the value at zero by
continuous extension
\begin{equation}
E_i(0)
\triangleq
\lim_{q_i\downarrow0}E_i(q_i)
=
\frac{2\alpha_i\beta_i}{(\alpha_i+\beta_i)^2}.
\end{equation}
Accordingly, every occurrence of $q_i=0$ in the optimization
and candidate sets is understood in this vanishing-positive-update-rate
sense, rather than as the error of an exact zero-update process under
an arbitrary initial estimate.
\end{remark}

\begin{theorem}[Monotonicity and curvature of the RTE]
\label{thm:rte_monotonicity_curvature}
For $q_i\in[0,1]$, with $E_i(0)$ defined by the
continuous extension in Remark~\ref{rem:zero_update_endpoint}, the RTE
satisfies\begin{equation}
E_i'(q_i)
=
-\frac{K_i}{D_i(q_i)^2}<0,
\label{eq:Ei_first_derivative_journal}
\end{equation}
and
\begin{equation}
E_i''(q_i)
=
\frac{2K_i\lambda_i}{D_i(q_i)^3}.
\label{eq:Ei_second_derivative_journal}
\end{equation}
Hence, $E_i(q_i)$ is strictly decreasing in the effective update probability. Moreover, it is convex in $q_i$ when $\lambda_i>0$, linear when $\lambda_i=0$, and concave when $\lambda_i<0$.
\end{theorem}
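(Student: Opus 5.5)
We will work directly from the compact form \eqref{eq:Ei_compact_journal}, $E_i(q_i)=K_i(1-q_i)/D_i(q_i)$ with $D_i(q_i)=1-\lambda_i(1-q_i)$. The plan has four steps: check that the formula is well defined on all of $[0,1]$, differentiate once, differentiate again, and read off the signs.

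\emph{Well-posedness.} Since $\alpha_i,\beta_i\in(0,1)$, we have $\lambda_i\in(-1,1)$ and hence $1-\lambda_i=\alpha_i+\beta_i>0$, so $K_i>0$. The function $D_i$ is affine in $q_i$, with $D_i(0)=1-\lambda_i>0$ and $D_i(1)=1>0$. It is therefore strictly positive on $[0,1]$, and $E_i$ is a smooth rational function there.

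\emph{Consistency at $q_i=0$.} Substituting $q_i=0$ gives
\[
E_i(0)=\frac{K_i}{1-\lambda_i}=\frac{2\alpha_i\beta_i}{(\alpha_i+\beta_i)^2},
\]
which is exactly the continuous extension in Remark~\ref{rem:zero_update_endpoint}. So the rational formula can be used on the closed interval, including the endpoint.

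\emph{First derivative.} We apply the quotient rule, using $D_i'(q_i)=\lambda_i$:
\[
E_i'(q_i)=K_i\,\frac{-D_i(q_i)-(1-q_i)\lambda_i}{D_i(q_i)^2}.
\]
The key simplification is that the numerator collapses: $D_i(q_i)+\lambda_i(1-q_i)=1$. This gives $E_i'(q_i)=-K_i/D_i(q_i)^2$, which is strictly negative because $K_i>0$. Hence $E_i$ is strictly decreasing on $[0,1]$.

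\emph{Second derivative.} Differentiating $-K_iD_i^{-2}$ once more gives
\[
E_i''(q_i)=2K_i\,D_i'(q_i)\,D_i(q_i)^{-3}=\frac{2K_i\lambda_i}{D_i(q_i)^3}.
\]

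\emph{Curvature classification.} Since $K_i>0$ and $D_i>0$ on $[0,1]$, the sign of $E_i''$ is the sign of $\lambda_i$. Thus $E_i$ is convex when $\lambda_i>0$, concave when $\lambda_i<0$, and linear when $\lambda_i=0$. In the last case we also have $D_i\equiv 1$ and $E_i=K_i(1-q_i)$ explicitly.

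The only point needing care is establishing $D_i>0$ uniformly on $[0,1]$, including at the endpoint handled by the continuous extension. This follows from the affine endpoint check above, so we expect no real obstacle.
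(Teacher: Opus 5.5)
Your proposal is correct and takes essentially the same route as the paper: differentiate the compact form $E_i=K_i(1-q_i)/D_i(q_i)$ twice, then read off the signs from $K_i>0$, $D_i>0$ on $[0,1]$, and the sign of $\lambda_i$. The extra details you supply fill in steps the paper only asserts: the numerator collapse $D_i+\lambda_i(1-q_i)=1$, positivity of the affine $D_i$ from its endpoint values, and agreement with the continuous extension at $q_i=0$.
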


\begin{IEEEproof}
Differentiating \eqref{eq:Ei_compact_journal} with respect to $q_i$ gives \eqref{eq:Ei_first_derivative_journal}. Differentiating once more gives \eqref{eq:Ei_second_derivative_journal}. Since $D_i(q_i)>0$ for $q_i\in[0,1]$ and $|\lambda_i|<1$, the sign of $E_i''(q_i)$ is determined by the sign of $\lambda_i$.
\end{IEEEproof}

For the two-source system, the weighted RTE is
\begin{equation}
F(q_1,q_2)
=
w_1E_1(q_1)+w_2E_2(q_2),
\qquad
w_1,w_2>0.\label{eq:weighted_rte_journal}
\end{equation}
Here, $w_i$ captures the relative semantic or actuation importance of source $i$. Since each $E_i(q_i)$ is strictly decreasing in $q_i$, increasing either effective update probability improves the corresponding reconstruction quality.

\subsection{Cost of Actuation Error}
\label{subsec:cae_journal}

We next assign costs to erroneous reconstruction states. For source $i$, let $C_i^{x,\hat x}$ denote the cost incurred when the true state is $x$ while the reconstructed state is $\hat x\neq x$. The cost of actuation error (CAE) of source $i$ is
\begin{equation}
\bar C_i
\triangleq
\sum_{x\neq \hat x}
C_i^{x,\hat x}\pi_i(x,\hat x).
\label{eq:cae_generic_journal}
\end{equation}
For the binary source, the only erroneous states are $(0,1)$ and $(1,0)$, so
\begin{equation}
\bar C_i
=
C_i^{0,1}\pi_i(0,1)
+
C_i^{1,0}\pi_i(1,0).
\label{eq:cae_binary_journal}
\end{equation}
Using Theorem~\ref{thm:closed_form_rte}, for $q_i>0$ we have
$\pi_i(0,1)=\pi_i(1,0)=\zeta_i$. Therefore,
\begin{equation}
\bar C_i
=
\big(C_i^{0,1}+C_i^{1,0}\big)\zeta_i.
\label{eq:cae_zeta_journal}
\end{equation}
Substituting \eqref{eq:zeta_closed_form_journal} gives
\begin{equation}
\bar C_i(q_i)
=
\frac{
\big(C_i^{0,1}+C_i^{1,0}\big)\alpha_i\beta_i(1-q_i)
}
{
(\alpha_i+\beta_i)\big[(\alpha_i+\beta_i)-q_i(\alpha_i+\beta_i-1)\big]
}.
\label{eq:cae_closed_form_journal}
\end{equation}
Equivalently,
\begin{equation}
\bar C_i(q_i)
=
\frac{C_i^{0,1}+C_i^{1,0}}{2}E_i(q_i).
\label{eq:cae_scaled_rte_journal}
\end{equation}
The weighted total CAE is
\begin{equation}
\bar C(q_1,q_2)
=
w_1\bar C_1(q_1)+w_2\bar C_2(q_2).
\label{eq:weighted_cae_journal}
\end{equation}

\begin{remark}
The reduction in \eqref{eq:cae_scaled_rte_journal} does not require symmetric actuation costs. The costs $C_i^{0,1}$ and $C_i^{1,0}$ may be different. The reason CAE becomes proportional to RTE is that, under the considered binary Markov  model and synchronize-or-hold estimator, the two mismatch states have equal steady-state probabilities. Therefore, weighted CAE minimization is equivalent to weighted RTE minimization with modified weights
\begin{equation}
\tilde w_i
=
w_i\frac{C_i^{0,1}+C_i^{1,0}}{2},
\qquad i\in\{1,2\}.
\label{eq:modified_weights_cae_journal}
\end{equation}
This equivalence is specific to the considered binary-source setting. For multi-state sources, or for state-dependent sampling policies whose update decisions depend on $X_i(t)$ or on the mismatch state $(X_i(t),\hat X_i(t))$, different error states need not have equal stationary probabilities, and the CAE is not generally a scaled RTE.
\end{remark}

In the remainder of the paper, we use the weighted RTE objective in \eqref{eq:weighted_rte_journal}

with the understanding that the same analysis applies to the weighted CAE objective after replacing $w_i$ by the modified weights in \eqref{eq:modified_weights_cae_journal}. 

\vspace{-10pt}
\section{Sampling-Constrained Independent MPR Optimization}
\label{sec:independent_optimization}

We now optimize the stationary independent randomized sampling policy under per-sensor sampling constraints. For each sensor $k\in\{1,2\}$, let $\Gamma_k\in(0,1]$ denote the maximum allowable sampling rate, i.e., the maximum probability with which sensor $k$ is allowed to transmit in a slot. Since sensor $k$ transmits whenever it selects action $1$ or action $2$, the sampling constraint is
\begin{equation}
a_{k,1}+a_{k,2}\leq \Gamma_k,
\qquad k\in\{1,2\}.
\label{eq:budget_constraint_ind}
\end{equation}

Using the task-oriented objective in \eqref{eq:weighted_rte_journal}, the sampling-constrained independent MPR optimization problem is

\begin{align}
F_{\mathrm{ind}}^\star(\Gamma)
&=
\min_{\{\mathbf a_k\}_{k=1}^2}
\; F(q_1,q_2)
\label{eq:Pind_objective}
\\[-0.2em]
\text{s.t.}\quad
a_{k,j}
&\geq 0,
\qquad k\in\{1,2\},\; j\in\{0,1,2\},
\label{eq:Pind_nonnegative}
\\
a_{k,0}+a_{k,1}+a_{k,2}
&= 1,
\qquad k\in\{1,2\},
\label{eq:Pind_simplex}
\\
a_{k,1}+a_{k,2}
&\leq \Gamma_k,
\qquad k\in\{1,2\}.
\label{eq:Pind_budget}
\end{align}

where $q_1$ and $q_2$ are the MPR-induced effective update probabilities in \eqref{eq:q1_explicit_journal}--\eqref{eq:q2_explicit_journal}.

To make the structure of \eqref{eq:Pind_objective}--\eqref{eq:Pind_budget} explicit, define
\begin{equation}
u_1\triangleq a_{1,1},\qquad
u_2\triangleq a_{1,2},\qquad
v_1\triangleq a_{2,1},\qquad
v_2\triangleq a_{2,2}.
\label{eq:uv_def_ind}
\end{equation}
The corresponding idle probabilities are $u_0=1-u_1-u_2$ and $v_0=1-v_1-v_2$.
The feasible set can therefore be written as
\begin{equation}
\mathcal A_\Gamma
=
\Delta_{\Gamma_1}\times \Delta_{\Gamma_2},
\label{eq:A_gamma_def}
\end{equation}
where
\begin{equation}
\Delta_{\Gamma_k}
\triangleq
\left\{
(x_1,x_2)\in\mathbb R_+^2:
x_1+x_2\leq \Gamma_k
\right\},
\qquad k\in\{1,2\}.
\label{eq:Delta_gamma_def}
\end{equation}
Thus, $\Delta_{\Gamma_k}$ is the budget-constrained action triangle of sensor $k$ in the non-idle action coordinates.

Using \eqref{eq:channel_notation}--\eqref{eq:c_def}, define \begin{equation} \gamma_{11}\triangleq c-s_1-s_2,\qquad \gamma_{12}\triangleq a-s_1,\qquad \gamma_{21}\triangleq b-s_2 . \label{eq:gamma_defs_ind} \end{equation} Since \(a\leq s_1\) and \(b\leq s_2\), \(\gamma_{12}\leq0\) and \(\gamma_{21}\leq0\). Substituting \(u_0=1-u_1-u_2\) and \(v_0=1-v_1-v_2\) into \eqref{eq:q1_explicit_journal}--\eqref{eq:q2_explicit_journal} gives \begin{equation} q_1(u,v) = s_1u_1+s_2v_1 +\gamma_{11}u_1v_1 +\gamma_{12}u_1v_2 +\gamma_{21}u_2v_1, \label{eq:q1_bilinear_ind} \end{equation} \begin{equation} q_2(u,v) = s_1u_2+s_2v_2 +\gamma_{11}u_2v_2 +\gamma_{12}u_2v_1 +\gamma_{21}u_1v_2. \label{eq:q2_bilinear_ind} \end{equation} Thus, independent randomization induces bilinear update probabilities, making the policy-space optimization generally nonconvex.

\section{Achievable Update-Rate Geometry}
\label{sec:update_rate_geometry}
The problem in Section~\ref{sec:independent_optimization} is generally
nonconvex because \(q_1(u,v)\) and \(q_2(u,v)\) are bilinear. Since the
objective depends only on the induced pair \((q_1,q_2)\), we analyze the
feasible update-rate region.

\subsection{Update-Rate Region}
\label{subsec:update_region_pareto}
Define the policy-to-update mapping
\begin{equation}
T_\Gamma:\mathcal A_\Gamma\to\mathbb R^2,
\qquad
T_\Gamma(u,v)=\big(q_1(u,v),q_2(u,v)\big),
\label{eq:TGamma_journal}
\end{equation}
where $\mathcal A_\Gamma$ is the independent feasible set in
\eqref{eq:A_gamma_def}, and $q_1,q_2$ are given by
\eqref{eq:q1_bilinear_ind}--\eqref{eq:q2_bilinear_ind}. The achievable
update-rate region is
\begin{equation}
\mathcal Q_\Gamma
\triangleq
T_\Gamma(\mathcal A_\Gamma)
=
\big\{
(q_1(u,v),q_2(u,v)):(u,v)\in\mathcal A_\Gamma
\big\}.
\label{eq:QGamma_journal}
\end{equation}
Thus, $\mathcal Q_\Gamma$ contains all update-rate pairs that can be
generated by feasible independent randomized policies.

The independent optimization problem can equivalently be written in the
update-rate plane as
\begin{equation}
F_{\mathrm{ind}}^\star(\Gamma)
=
\min_{(q_1,q_2)\in\mathcal Q_\Gamma}
F(q_1,q_2).
\label{eq:update_rate_problem_journal}
\end{equation}

\begin{lemma}[Equivalent update-rate optimization]
\label{lem:update_rate_equivalence_journal}
The policy-space problem in \eqref{eq:Pind_objective}--\eqref{eq:Pind_budget}
and the update-rate problem in \eqref{eq:update_rate_problem_journal}
have the same optimal value.
\end{lemma}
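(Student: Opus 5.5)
The plan is a two-sided inequality argument. The key observation is that the objective in \eqref{eq:Pind_objective} depends on the policy $(u,v)$ only through the pair $T_\Gamma(u,v)=(q_1(u,v),q_2(u,v))$. Write $F(q_1,q_2)$ as in \eqref{eq:weighted_rte_journal}, with $E_i(0)$ understood via Remark~\ref{rem:zero_update_endpoint}. Then the policy-space objective equals the composition $F\circ T_\Gamma$ on $\mathcal A_\Gamma$, and the lemma reduces to the elementary identity
\[
\inf_{(u,v)\in\mathcal A_\Gamma} F\big(T_\Gamma(u,v)\big)=\inf_{(q_1,q_2)\in T_\Gamma(\mathcal A_\Gamma)} F(q_1,q_2).
\]

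The first step is to check that the policy-space problem is exactly a minimization over $\mathcal A_\Gamma$. Using \eqref{eq:uv_def_ind}, the constraints \eqref{eq:Pind_nonnegative}--\eqref{eq:Pind_budget} are equivalent to $(u_1,u_2)\in\Delta_{\Gamma_1}$ and $(v_1,v_2)\in\Delta_{\Gamma_2}$, since $u_0=1-u_1-u_2\ge 1-\Gamma_1\ge0$ is then automatic. Also, \eqref{eq:q1_explicit_journal}--\eqref{eq:q2_explicit_journal} coincide with \eqref{eq:q1_bilinear_ind}--\eqref{eq:q2_bilinear_ind} after this substitution.

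The second step proves the identity in two directions. For any feasible $(u,v)$, the point $T_\Gamma(u,v)$ lies in $\mathcal Q_\Gamma$ by definition \eqref{eq:QGamma_journal}, so the policy value is at least $F_{\mathrm{ind}}^\star$ in the update-rate form. Conversely, every $(q_1,q_2)\in\mathcal Q_\Gamma$ has a preimage $(u,v)\in\mathcal A_\Gamma$ achieving the same objective value, which gives the reverse inequality.

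The last step, and the only mildly delicate one, is to confirm that both minima are actually attained, so that writing "min" is justified. The set $\mathcal A_\Gamma$ is a product of two triangles, hence compact. The map $T_\Gamma$ is polynomial and thus continuous, so $\mathcal Q_\Gamma$ is compact. It remains to show that $F$ is continuous on $[0,1]^2$. Each $E_i$ is a rational function with denominator $D_i(q_i)\ge 1-|\lambda_i|>0$, and its value at $q_i=0$ is defined by continuous extension. The required bounds $0\le q_i\le1$ follow from $q_i$ being a probability. Weierstrass then gives attainment on both sides, and any minimizer of one problem maps to, or lifts to, a minimizer of the other.
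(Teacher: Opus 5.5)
Your proposal is correct and follows essentially the paper's own argument: the objective factors through $T_\Gamma$, every feasible policy maps into $\mathcal Q_\Gamma$, and every point of $\mathcal Q_\Gamma$ has a feasible preimage, so the two optimal values coincide. Your extra checks are valid refinements that the paper leaves implicit: identifying the constraints with $\mathcal A_\Gamma$, matching the explicit and bilinear forms of $q_i$, and establishing attainment via compactness and continuity using $D_i(q_i)\ge 1-|\lambda_i|>0$.
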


\begin{IEEEproof}
Every feasible policy $(u,v)\in\mathcal A_\Gamma$ induces exactly one
point $(q_1(u,v),q_2(u,v))\in\mathcal Q_\Gamma$. Conversely, every point
in $\mathcal Q_\Gamma$ is generated by at least one feasible policy by
definition. Since the objective depends on $(u,v)$ only through
$(q_1,q_2)$, minimizing over $\mathcal A_\Gamma$ is equivalent to
minimizing over $\mathcal Q_\Gamma$.
\end{IEEEproof}

A point \(q\in\mathcal Q_\Gamma\) is Pareto dominated if there exists
\(\widetilde q\in\mathcal Q_\Gamma\) such that
\(\widetilde q_i\ge q_i\) for \(i=1,2\), with at least one strict
inequality. The Pareto frontier \(\partial_{\mathrm P}\mathcal Q_\Gamma\)
is the set of nondominated points.

\begin{theorem}
\label{thm:pareto_localization_journal}
Every global minimizer of \eqref{eq:update_rate_problem_journal} lies on
the Pareto frontier $\partial_{\mathrm P}\mathcal Q_\Gamma$.
\end{theorem}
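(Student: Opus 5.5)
The plan is a proof by contradiction that relies only on strict monotonicity of the objective, as established in Theorem~\ref{thm:rte_monotonicity_curvature}. First, a global minimizer exists. The feasible set $\mathcal A_\Gamma=\Delta_{\Gamma_1}\times\Delta_{\Gamma_2}$ is compact, and $T_\Gamma$ is continuous because it is polynomial, as seen in \eqref{eq:q1_bilinear_ind}--\eqref{eq:q2_bilinear_ind}. Hence $\mathcal Q_\Gamma$ is compact. Moreover, each $q_i$ lies in $[0,1]$, since it is the probability of an event. The function $F$ is continuous on $[0,1]^2$, using the continuous extension at $q_i=0$ from Remark~\ref{rem:zero_update_endpoint}. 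By Weierstrass, the minimum in \eqref{eq:update_rate_problem_journal} is therefore attained, and by Lemma~\ref{lem:update_rate_equivalence_journal} it coincides with the policy-space optimum.

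Next, let $q^\star$ be any global minimizer, and suppose it is not on $\partial_{\mathrm P}\mathcal Q_\Gamma$. Then there is a point $\widetilde q\in\mathcal Q_\Gamma$ with $\widetilde q_i\ge q^\star_i$ for $i=1,2$, and with strict inequality for some index $j$. Theorem~\ref{thm:rte_monotonicity_curvature} gives $E_i'<0$ on $[0,1]$, so each $E_i$ is strictly decreasing there. This still holds at the left endpoint, because the extended $E_i$ is continuous and has a negative derivative on $(0,1]$. It follows that $E_i(\widetilde q_i)\le E_i(q^\star_i)$ for both $i$, and that $E_j(\widetilde q_j)<E_j(q^\star_j)$. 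Since $w_1,w_2>0$, summing gives $F(\widetilde q)<F(q^\star)$, which contradicts global optimality. Hence every global minimizer is nondominated, that is, $q^\star\in\partial_{\mathrm P}\mathcal Q_\Gamma$.

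The only delicate points are bookkeeping ones. The first is the endpoint $q_i=0$, where $E_i$ is defined only by continuous extension. Strict monotonicity there follows from the mean value theorem applied to the extended function on $[0,\widetilde q_i]$. The second is that the Pareto order compares only points inside $\mathcal Q_\Gamma$. This is exactly what the contradiction argument uses, so no further structure of the bilinear map is needed.
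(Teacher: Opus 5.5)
Your proof is correct and follows the same route as the paper: assume a global minimizer $q^\star$ is dominated by some $\widetilde q\in\mathcal Q_\Gamma$, then use the strict monotonicity of each $E_i$ together with $w_1,w_2>0$ to get $F(\widetilde q)<F(q^\star)$, which is a contradiction. The existence argument via compactness and the endpoint bookkeeping at $q_i=0$ are harmless extras: the statement only concerns minimizers, so existence is not needed, and the rational expression $K_i(1-q_i)/D_i(q_i)$ is already smooth at $q_i=0$ since $D_i>0$ there.
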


\begin{IEEEproof}
Suppose that $q^\star=(q_1^\star,q_2^\star)$ is a global minimizer but
is dominated by some $\widetilde q=(\widetilde q_1,\widetilde q_2)$.
Since each $E_i(q_i)$ is strictly decreasing in $q_i$, we have
\(E_i(\widetilde q_i)\le E_i(q_i^\star),\qquad i\in\{1,2\}\).
with strict inequality for at least one source. Since $w_1,w_2>0$,
\(F(\widetilde q_1,\widetilde q_2)<F(q_1^\star,q_2^\star)\).
which contradicts the optimality of $q^\star$. Hence no dominated point
can be globally optimal.
\end{IEEEproof}

\subsection{Total-Success Envelope}
\label{subsec:total_success_envelope}

We next derive a budget-aware upper bound on the total update success
$q_1+q_2$ that identifies regimes in which the Pareto frontier has a
simple structure.

The total update success can be written as
\begin{align}
q_1+q_2
&=
s_1(u_1+u_2)v_0+s_2u_0(v_1+v_2)
\nonumber\\
&\quad
+c(u_1v_1+u_2v_2)
+(a+b)(u_1v_2+u_2v_1).
\label{eq:qsum_pure_journal}
\end{align}
Let
\begin{equation}
t_1\triangleq u_1+u_2,\qquad
t_2\triangleq v_1+v_2.
\label{eq:transmission_intensities_geometry}
\end{equation}
Under the sampling constraints, $0\le t_1\le \Gamma_1$ and
$0\le t_2\le \Gamma_2$. Since
\[
u_1v_1+u_2v_2+u_1v_2+u_2v_1=t_1t_2,
\]
and $c\leq a+b$, \footnote{Conditioned on both sensors transmitting,
let $\mathcal E_k\triangleq\{Z_k(t)=1\}$. Then
$a=\Pr\{\mathcal E_1\}$, $b=\Pr\{\mathcal E_2\}$, and
$c=\Pr\{\mathcal E_1\cup\mathcal E_2\}$. Inclusion--exclusion gives
$c=a+b-\Pr\{\mathcal E_1\cap\mathcal E_2\}\leq a+b$,
$\mathcal E_k\subseteq\mathcal E_1\cup\mathcal E_2$ implies
$c\geq\max\{a,b\}$.}
we obtain
\begin{align}
q_1+q_2
&\le
s_1t_1(1-t_2)+s_2(1-t_1)t_2+(a+b)t_1t_2.
\label{eq:qsum_upper_G_journal}
\end{align}

\begin{theorem}[Budget-aware total-success envelope]
\label{thm:budget_success_envelope_journal}
For every feasible independent policy $(u,v)\in\mathcal A_\Gamma$,
\begin{equation}
q_1(u,v)+q_2(u,v)\le B_\Gamma,
\label{eq:BGamma_bound_journal}
\end{equation}
where
\begin{align}
B_\Gamma
=
\max\Big\{
&\Gamma_1s_1,\;
\Gamma_2s_2,
\nonumber\\
&
s_1\Gamma_1(1-\Gamma_2)
+s_2(1-\Gamma_1)\Gamma_2
+(a+b)\Gamma_1\Gamma_2
\Big\}.
\label{eq:BGamma_def_journal}
\end{align}
Moreover, each of the three values inside the maximum is achievable by
a feasible independent policy.
\end{theorem}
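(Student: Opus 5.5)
Starting from the bound \eqref{eq:qsum_upper_G_journal}, which holds for every feasible policy because $c\le a+b$ and $u_1v_1+u_2v_2+u_1v_2+u_2v_1=t_1t_2$, it suffices to maximize the right-hand side
\[
G(t_1,t_2)\triangleq s_1t_1(1-t_2)+s_2(1-t_1)t_2+(a+b)t_1t_2
\]
over the rectangle $[0,\Gamma_1]\times[0,\Gamma_2]$. The key observation is that $G$ is affine in each variable separately: for fixed $t_2$ it is linear in $t_1$, and vice versa. A function that is multilinear on a box attains its maximum at a vertex. So the maximum of $G$ over the rectangle equals $\max\{G(0,0),G(\Gamma_1,0),G(0,\Gamma_2),G(\Gamma_1,\Gamma_2)\}$. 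Fix $t_2$, maximize the linear function of $t_1$ at an endpoint, then repeat for $t_2$.

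Evaluating the vertices gives $G(0,0)=0$, $G(\Gamma_1,0)=\Gamma_1s_1$, $G(0,\Gamma_2)=\Gamma_2s_2$, and $G(\Gamma_1,\Gamma_2)=s_1\Gamma_1(1-\Gamma_2)+s_2(1-\Gamma_1)\Gamma_2+(a+b)\Gamma_1\Gamma_2$. Since all terms are nonnegative, the zero vertex never exceeds the others. Hence $q_1+q_2\le G(t_1,t_2)\le B_\Gamma$, which is \eqref{eq:BGamma_bound_journal}.

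For achievability, I will exhibit explicit policies and compute $q_1+q_2$ from \eqref{eq:qsum_pure_journal}.
\begin{itemize}
\item For $\Gamma_1s_1$: take $u_1=\Gamma_1$, $u_2=0$ and $v_1=v_2=0$. Then $v_0=1$ and $q_1+q_2=s_1\Gamma_1$.
\item For $\Gamma_2s_2$: use the symmetric policy, with sensor 1 silent and $v_1=\Gamma_2$.
\item For the third value: the slack in the bound comes only from the term $c(u_1v_1+u_2v_2)$ being replaced by $(a+b)(u_1v_1+u_2v_2)$. I must therefore make $u_1v_1+u_2v_2=0$ while both sensors transmit at full budget. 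Take $u_1=\Gamma_1$, $u_2=0$, $v_1=0$, $v_2=\Gamma_2$, so the sensors always serve different sources. Then \eqref{eq:qsum_pure_journal} gives exactly $s_1\Gamma_1(1-\Gamma_2)+s_2(1-\Gamma_1)\Gamma_2+(a+b)\Gamma_1\Gamma_2$.
\end{itemize}
All three policies respect \eqref{eq:Pind_budget}, so each value in the maximum is attained.

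The only step needing care is the vertex argument, that is, making sure bilinearity on a box suffices without any sign condition on the cross coefficient $a+b-s_1-s_2$. It does, because the coordinate-wise endpoint argument needs no concavity.
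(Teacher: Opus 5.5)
Your proposal is correct and follows the paper's proof: you use the same upper bound $G(t_1,t_2)$, the same reduction to the four corners of $[0,\Gamma_1]\times[0,\Gamma_2]$, and the same achieving policies. The only difference is cosmetic: you justify the corner reduction by maximizing coordinatewise over endpoints, whereas the paper writes $G$ explicitly as a convex combination of its four corner values (exact bilinear interpolation).
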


\begin{IEEEproof}
Define
\begin{equation}
G(t_1,t_2)
\triangleq
s_1t_1(1-t_2)+s_2(1-t_1)t_2+(a+b)t_1t_2.
\label{eq:G_def_journal}
\end{equation}
By \eqref{eq:qsum_upper_G_journal}, $q_1+q_2\le G(t_1,t_2)$. The function
$G(t_1,t_2)$ is bilinear in $(t_1,t_2)$. Over the rectangle
$[0,\Gamma_1]\times[0,\Gamma_2]$, it can be written as the convex
combination
\begin{equation}
\hspace*{-0.35cm}
\begin{aligned}
G(t_1,t_2)
={}&
\Big(1-\frac{t_1}{\Gamma_1}\Big)
\Big(1-\frac{t_2}{\Gamma_2}\Big)G(0,0)
+
\frac{t_1}{\Gamma_1}
\Big(1-\frac{t_2}{\Gamma_2}\Big)G(\Gamma_1,0)
\\
&+
\Big(1-\frac{t_1}{\Gamma_1}\Big)
\frac{t_2}{\Gamma_2}G(0,\Gamma_2)
+
\frac{t_1t_2}{\Gamma_1\Gamma_2}G(\Gamma_1,\Gamma_2).
\end{aligned}
\label{eq:G_weighted_average_corners}
\end{equation}
Hence $G(t_1,t_2)$ cannot exceed the maximum of its four corner values.
Evaluating the corners gives $G(0,0)=0$, $G(\Gamma_1,0)=\Gamma_1s_1$,
$G(0,\Gamma_2)=\Gamma_2s_2$, and
\begin{equation}
G(\Gamma_1,\Gamma_2)
=
s_1\Gamma_1(1-\Gamma_2)
+s_2(1-\Gamma_1)\Gamma_2
+(a+b)\Gamma_1\Gamma_2.
\end{equation}
This proves \eqref{eq:BGamma_bound_journal}.

The value $\Gamma_1s_1$ is achieved by setting $v_1=v_2=0$ and
$u_1+u_2=\Gamma_1$. The value $\Gamma_2s_2$ is achieved by setting
$u_1=u_2=0$ and $v_1+v_2=\Gamma_2$. The cooperative value
$G(\Gamma_1,\Gamma_2)$ is achieved, for example, by
\[
u_1=\Gamma_1,\quad u_2=0,\quad v_1=0,\quad v_2=\Gamma_2,
\]
for which simultaneous transmissions always carry different information from sources.
\end{IEEEproof}

\subsection{Dominant-Transmitter Regimes}
\label{subsec:dominant_transmitter_geometry}

The envelope in \eqref{eq:BGamma_def_journal} separates regimes according
to which transmission mode gives the largest possible total update
success. We first consider the case in which sensor~1 alone attains the
largest envelope value.

\begin{theorem}[Sensor-1 dominant frontier]
\label{thm:sensor1_dominant_frontier_journal}
Suppose that
\begin{align}
\Gamma_1s_1
\ge
\max\Big\{
&\Gamma_2s_2,
\nonumber\\
&
s_1\Gamma_1(1-\Gamma_2)
+s_2(1-\Gamma_1)\Gamma_2
+(a+b)\Gamma_1\Gamma_2
\Big\}.
\label{eq:sensor1_dominance_condition_journal}
\end{align}
Then the line segment
\begin{equation}
\mathcal L_1
=
\big\{
(q_1,q_2):q_1+q_2=\Gamma_1s_1,\;
q_1\ge0,\;q_2\ge0
\big\},
\label{eq:L1_def_journal}
\end{equation}
is an achievable Pareto frontier segment of $\mathcal Q_\Gamma$. It is
generated by the policy edge
\begin{equation}
v_1=v_2=0,\qquad
u_1=\Gamma_1\theta,\qquad
u_2=\Gamma_1(1-\theta),
\qquad 0\le\theta\le1.
\label{eq:sensor1_policy_edge_journal}
\end{equation}
If the inequality in \eqref{eq:sensor1_dominance_condition_journal} is
strict, then every Pareto-efficient point of $\mathcal Q_\Gamma$ lies on
$\mathcal L_1$.
\end{theorem}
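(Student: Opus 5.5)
Under condition \eqref{eq:sensor1_dominance_condition_journal}, the envelope of Theorem~\ref{thm:budget_success_envelope_journal} collapses to $B_\Gamma=\Gamma_1 s_1$, since $\Gamma_1 s_1$ is the largest of the three terms in \eqref{eq:BGamma_def_journal}. Hence every $(q_1,q_2)\in\mathcal Q_\Gamma$ satisfies $q_1+q_2\le \Gamma_1 s_1$. The plan is to use this single inequality for all three claims: achievability, Pareto efficiency, and the localization statement under strictness.

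\emph{Achievability.} Substitute the policy edge \eqref{eq:sensor1_policy_edge_journal} into \eqref{eq:q1_bilinear_ind}--\eqref{eq:q2_bilinear_ind}. With $v_1=v_2=0$, every bilinear term vanishes, so
\[
q_1=s_1u_1=\Gamma_1 s_1\theta,\qquad q_2=s_1u_2=\Gamma_1 s_1(1-\theta).
\]
As $\theta$ ranges over $[0,1]$, this traces exactly $\mathcal L_1$. The policy is feasible because $u_1+u_2=\Gamma_1\le\Gamma_1$, $v=0$, and all entries are nonnegative. Therefore $\mathcal L_1\subseteq\mathcal Q_\Gamma$.

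\emph{Pareto efficiency.} Fix $q\in\mathcal L_1$ and suppose some $\widetilde q\in\mathcal Q_\Gamma$ dominates it. Then $\widetilde q_1+\widetilde q_2>q_1+q_2=\Gamma_1 s_1=B_\Gamma$, which contradicts the envelope. So every point of $\mathcal L_1$ is nondominated, and $\mathcal L_1$ is a Pareto frontier segment.

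\emph{Localization under strict inequality.} This is the main step. Take a Pareto-efficient point $q\in\mathcal Q_\Gamma$ and suppose $q_1+q_2<\Gamma_1 s_1$. I would construct a point of $\mathcal L_1$ that dominates $q$. Let $\delta=\Gamma_1 s_1-q_1-q_2>0$ and set $\widetilde q=(q_1+\delta\mu,\,q_2+\delta(1-\mu))$ with $\mu\in[0,1]$. This point lies on $\mathcal L_1$ provided both coordinates are nonnegative, which holds because $q_i\ge0$. Taking, say, $\mu=1/2$ makes both coordinates strictly larger, so $\widetilde q$ dominates $q$. This contradicts Pareto efficiency, hence $q_1+q_2=\Gamma_1 s_1$ and $q\in\mathcal L_1$.

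This argument uses only the envelope bound together with the fact that $\mathcal L_1\subseteq\mathcal Q_\Gamma$ covers the whole simplex segment, including its corners. Strictness is not actually needed for it. I expect the paper invokes strictness to rule out ties, for example equality with the cooperative value, where other policies could also attain total $\Gamma_1 s_1$. Even then, those points would still lie on $\mathcal L_1$ as sets in the $q$-plane, so the only real care needed is that $q_i\ge0$ always holds. Since $q_1,q_2$ are probabilities, this is immediate.
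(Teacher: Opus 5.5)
Your proof is correct and follows the paper's argument. The envelope of Theorem~\ref{thm:budget_success_envelope_journal} collapses to $\Gamma_1 s_1$, the sensor-1 policy edge traces $\mathcal L_1$, and any feasible point strictly below the line is dominated by a point of $\mathcal L_1$ (the paper moves the slack entirely into $q_1$ via $(q_1+\Delta,q_2)$, whereas you split it evenly). Your observation that strictness is not needed for the localization claim is also accurate: the paper's own dominance step never uses it, and any tie-attaining point has $q_1+q_2=\Gamma_1 s_1$ with $q_i\ge 0$, so it already lies on $\mathcal L_1$.
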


\begin{IEEEproof}
By Theorem~\ref{thm:budget_success_envelope_journal}, every feasible
update-rate point satisfies $q_1+q_2\le \Gamma_1s_1$. The policy edge
\eqref{eq:sensor1_policy_edge_journal} gives
\[
q_1=\Gamma_1s_1\theta,\qquad
q_2=\Gamma_1s_1(1-\theta),
\]
and therefore generates the full segment $\mathcal L_1$.

Now take any feasible point $q=(q_1,q_2)$ with
$q_1+q_2<\Gamma_1s_1$. Let
\(\Delta=\Gamma_1s_1-q_1-q_2>0\).
Then $\widetilde q=(q_1+\Delta,q_2)$ lies on $\mathcal L_1$ and
dominates $q$. Hence no point strictly below the line
$q_1+q_2=\Gamma_1s_1$ can be Pareto efficient. No point on the line can
be dominated, because increasing one coordinate without decreasing the
other would violate the envelope. Therefore, $\mathcal L_1$ is a Pareto frontier segment. If \eqref{eq:sensor1_dominance_condition_journal} is
strict, no other envelope-attaining mode can generate an additional Pareto-efficient segment at the same total-success level.
\end{IEEEproof}

The sensor~2 dominant case is symmetric.

\subsection{Closed-Form Optimization on a Dominant Frontier}
\label{subsec:dominant_frontier_closed_form}

In the dominant-transmitter regimes of
Theorem~\ref{thm:sensor1_dominant_frontier_journal}, the independent
optimization reduces to a one-dimensional minimization over the dominant
Pareto segment. Let
\begin{equation}
S=
\begin{cases}
\Gamma_1s_1, & \text{sensor~1 dominant},\\
\Gamma_2s_2, & \text{sensor~2 dominant}.
\end{cases}
\label{eq:S_dominant_def_journal}
\end{equation}
Along the corresponding segment, write $q_1=S\theta$ and $q_2=S(1-\theta)$,
$0\le\theta\le1$. The restricted objective is
\begin{equation}
\phi(\theta)
=
w_1E_1(S\theta)+w_2E_2(S(1-\theta)).
\label{eq:dominant_phi_journal}
\end{equation}

\begin{theorem}[Closed-form dominant-frontier candidate]
\label{thm:dominant_closed_form_journal}
Let
$\eta\triangleq\sqrt{w_2K_2/(w_1K_1)}$.
If
\begin{equation}
S(\lambda_2+\eta\lambda_1)\ne0,
\label{eq:dominant_nonzero_condition_journal}
\end{equation}
then the unique interior stationary candidate of
\eqref{eq:dominant_phi_journal} is
\begin{equation}
\theta^\circ
=
\frac{
1-\lambda_2+\lambda_2S-\eta(1-\lambda_1)
}{
S(\lambda_2+\eta\lambda_1)
}.
\label{eq:theta_closed_form_journal}
\end{equation}
The global minimizer over the dominant frontier is obtained by evaluating
\begin{equation}
\theta\in
\{0,1\}
\cup
\{\theta^\circ:0<\theta^\circ<1\},
\label{eq:theta_candidate_set_journal}
\end{equation}
and selecting the point with the smallest value of $\phi(\theta)$.
\end{theorem}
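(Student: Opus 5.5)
The plan is to treat $\phi$ in \eqref{eq:dominant_phi_journal} as a smooth function of one variable on the compact interval $[0,1]$ and reduce its first-order condition to a linear equation in $\theta$. By Theorem~\ref{thm:rte_monotonicity_curvature}, each $E_i$ is continuously differentiable on $[0,1]$, with $E_i'(q)=-K_i/D_i(q)^2$. Since $S\le 1$, both arguments $S\theta$ and $S(1-\theta)$ stay in $[0,1]$, so $\phi$ is continuous on $[0,1]$ and differentiable on $(0,1)$. The chain rule then gives
\begin{equation*}
\phi'(\theta)=S\left[-\frac{w_1K_1}{D_1(S\theta)^2}+\frac{w_2K_2}{D_2(S(1-\theta))^2}\right].
\end{equation*}
Because $S>0$ under \eqref{eq:dominant_nonzero_condition_journal}, an interior stationary point must satisfy $w_1K_1 D_2(S(1-\theta))^2=w_2K_2 D_1(S\theta)^2$.

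The key step is to take square roots of this condition without introducing a sign ambiguity. We have $K_i>0$ because $\alpha_i,\beta_i\in(0,1)$ and $1-\lambda_i=\alpha_i+\beta_i>0$. We also have $D_i(q)=1-\lambda_i(1-q)>0$ on $[0,1]$, since $|\lambda_i|<1$, as already used in the proof of Theorem~\ref{thm:rte_monotonicity_curvature}. Hence only the positive root is admissible, and the condition becomes
\begin{equation*}
D_2(S(1-\theta))=\eta\,D_1(S\theta),
\end{equation*}
with $\eta=\sqrt{w_2K_2/(w_1K_1)}>0$. I expect this sign argument to be the only point that needs care; everything after it is algebra.

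Next I expand both sides using $D_1(S\theta)=1-\lambda_1+\lambda_1S\theta$ and $D_2(S(1-\theta))=1-\lambda_2+\lambda_2S-\lambda_2S\theta$. This gives the linear equation
\begin{equation*}
S(\lambda_2+\eta\lambda_1)\,\theta=1-\lambda_2+\lambda_2S-\eta(1-\lambda_1).
\end{equation*}
Under \eqref{eq:dominant_nonzero_condition_journal} its coefficient is nonzero, so it has exactly one solution, namely \eqref{eq:theta_closed_form_journal}. Every interior stationary point of $\phi$ must equal $\theta^\circ$, so $\theta^\circ$ is the unique candidate, and it is relevant only when $0<\theta^\circ<1$.

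Finally, $\phi$ is continuous on the compact set $[0,1]$, so it attains its minimum. A minimizer either lies at an endpoint or is an interior point where $\phi'=0$, and any such interior point must be $\theta^\circ$. The global minimizer over the dominant frontier therefore lies in the set \eqref{eq:theta_candidate_set_journal}, and it is found by comparing $\phi$ on this set of at most three values. Here $\theta=0$ or $\theta=1$ corresponds to the zero-rate endpoint, interpreted via Remark~\ref{rem:zero_update_endpoint}.

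To connect this back to the original problem: under strict dominance, Theorem~\ref{thm:sensor1_dominant_frontier_journal} places all Pareto points on the segment, and Theorem~\ref{thm:pareto_localization_journal} places every global minimizer on the Pareto frontier. Together they make this one-dimensional minimum equal to $F^\star_{\mathrm{ind}}(\Gamma)$.
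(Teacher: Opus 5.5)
Your proposal is correct and follows essentially the same route as the paper: differentiate $\phi$, use $E_i'=-K_i/D_i^2$ and the positivity of $D_i$ to take the positive square root, solve the resulting affine equation for $\theta^\circ$, and then argue that on the compact interval $[0,1]$ the minimum is attained at an endpoint or at an interior stationary point. Your extra checks ($S>0$ from the nonzero condition, $S\le 1$ so that both arguments stay in $[0,1]$, $K_i>0$ and $\eta>0$, and reading the endpoints via the continuous extension) make explicit what the paper leaves implicit.
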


\begin{IEEEproof}
Differentiating \eqref{eq:dominant_phi_journal} gives
$\phi'(\theta)=Sw_1E_1'(S\theta)-Sw_2E_2'(S(1-\theta))$.
Using \eqref{eq:Ei_first_derivative_journal},
an interior stationary point satisfies
\begin{equation}
\frac{w_1K_1}{D_1(S\theta)^2}
=
\frac{w_2K_2}{D_2(S(1-\theta))^2}.
\label{eq:dominant_balance_journal}
\end{equation}
Since $D_i(q)>0$, taking the positive square root gives
\begin{equation}
D_2(S(1-\theta))=\eta D_1(S\theta).
\label{eq:dominant_sqrt_balance_journal}
\end{equation}
Now
\[
D_1(S\theta)=1-\lambda_1+\lambda_1S\theta,
\]
and
\[
D_2(S(1-\theta))=1-\lambda_2+\lambda_2S(1-\theta).
\]
Substituting these affine expressions into
\eqref{eq:dominant_sqrt_balance_journal} and solving for $\theta$ gives
\eqref{eq:theta_closed_form_journal}. Since the minimum over the compact
interval $[0,1]$ is attained either at an endpoint or at an interior
stationary point, the candidate set in
\eqref{eq:theta_candidate_set_journal} is sufficient.
\end{IEEEproof}

\subsection{Cooperative Envelope Regime}
\label{subsec:cooperative_envelope_geometry}

The third envelope value in \eqref{eq:BGamma_def_journal} corresponds to
using both sampling budgets and exploiting simultaneous transmissions to
different sources. This cooperative envelope is
\begin{align}
B_\Gamma^{\mathrm{coop}}
&\triangleq
s_1\Gamma_1(1-\Gamma_2)
+s_2(1-\Gamma_1)\Gamma_2
+(a+b)\Gamma_1\Gamma_2.
\label{eq:BGamma_coop_journal}
\end{align}
It is achieved, for example, by $u_1=\Gamma_1$, $u_2=0$, $v_1=0$, $v_2=\Gamma_2$,
which gives
\begin{align}
q_1
&=
s_1\Gamma_1(1-\Gamma_2)+a\Gamma_1\Gamma_2,
\label{eq:coop_endpoint_12_q1}\\
q_2
&=
s_2(1-\Gamma_1)\Gamma_2+b\Gamma_1\Gamma_2.
\label{eq:coop_endpoint_12_q2}
\end{align}
The opposite different-source allocation $u_1=0$, $u_2=\Gamma_1$,
$v_1=\Gamma_2$, $v_2=0$ induces the mirrored pair, i.e.,
\eqref{eq:coop_endpoint_12_q1}--\eqref{eq:coop_endpoint_12_q2} with the
values of $q_1$ and $q_2$ interchanged.

When $B_\Gamma^{\mathrm{coop}}$ is the largest term in
\eqref{eq:BGamma_def_journal}, simultaneous different-source
transmission gives the largest total-success envelope. Unlike the
dominant-transmitter case, this does not by itself imply that the full
Pareto frontier is a single line segment. The structure of Pareto-efficient
independent randomized policies in this regime is treated in
Section~\ref{sec:pareto_reduction}.
\begin{remark}[Strong MPR and constrained operation]
Consider the full-budget case \(\Gamma_1=\Gamma_2=1\), and define \(M\triangleq\max\{s_1,s_2\}\) as the best single-sensor success probability. The cooperative different-source mode is useful only if its total simultaneous update success exceeds this best single-sensor term. In particular, if \(a+b<M\), the cooperative simultaneous-transmission term cannot be active. Hence, the dominant update-rate frontier is generated by a single sensor, sensor~1 if \(s_1\ge s_2\) and sensor~2 otherwise, and an optimal tradeoff can be achieved with one sensor silent.

Under sampling constraints, the corresponding condition becomes
budget dependent. In particular, the cooperative envelope is active
only if
\[
B_{\Gamma}^{\rm coop}
\geq
\max\{\Gamma_1s_1,\Gamma_2s_2\},
\]
or equivalently,
\[
s_2(1-\Gamma_1)+\Gamma_1(a+b-s_1)\geq 0,
\qquad
s_1(1-\Gamma_2)+\Gamma_2(a+b-s_2)\geq 0.
\]
If these inequalities fail, the total-success envelope is dominated
by a single-transmitter mode, and the Pareto-relevant operation is
generated by a policy in which one sensor remains silent while the
other allocates its budget between the two sources.
\end{remark}

\section{Pareto Reduction of Independent Randomization}
\label{sec:pareto_reduction}
The previous section characterized regimes in which the Pareto frontier
is generated by a dominant transmitter. We now consider the general
independent randomized problem under the sampling constraints
$t_1\leq \Gamma_1$ and $t_2\leq \Gamma_2$. The main result of this
section is that, for fixed transmission intensities, a Pareto-efficient
independent policy does not require both sensors to randomize between the
two sources. Equivalently, at least one sensor can be chosen to select a
single source whenever it transmits.

\subsection{Transmission Intensities and Conditional Source Selection}
\label{subsec:transmission_conditional_selection}

Recall that $t_1=u_1+u_2$ and $t_2=v_1+v_2$, where $0\leq t_1\leq\Gamma_1$ and
$0\leq t_2\leq\Gamma_2$. When $t_1,t_2>0$, define the conditional
source-selection probabilities
\begin{equation}
p\triangleq \frac{u_1}{t_1},
\qquad
q\triangleq \frac{v_1}{t_2}.
\label{eq:pq_conditional_pareto}
\end{equation}
Thus, $u_1=t_1p$, $u_2=t_1(1-p)$, $v_1=t_2q$, and $v_2=t_2(1-q)$,
with $p,q\in[0,1]$. The variables $t_1,t_2$ determine how often the
two sensors transmit, while $p,q$ determine which source is selected
conditional on transmission.

Define the sum and difference of the induced update probabilities as
\begin{equation}
Q\triangleq q_1+q_2,\qquad D\triangleq q_1-q_2.
\label{eq:QD_def_pareto}
\end{equation}
Then
\begin{equation}
q_1=\frac{Q+D}{2},\qquad q_2=\frac{Q-D}{2}.
\label{eq:q_from_QD}
\end{equation}
Hence, for a fixed value of $D$, increasing $Q$ increases both $q_1$
and $q_2$ simultaneously.

We further introduce $m\triangleq p+q-1$ and $r\triangleq p-q$, so that
$p=(1+m+r)/2$ and $q=(1+m-r)/2$.
The constraint $p,q\in[0,1]$ is equivalent to the diamond
\begin{equation}
|m|+|r|\leq 1.
\label{eq:diamond_constraint}
\end{equation}

For fixed $t_1,t_2$, the total update success can be written as
\begin{equation}
Q
=
C_0
+
\frac{t_1t_2}{2}
\left[
a+b+c
+
(a+b-c)(r^2-m^2)
\right],
\label{eq:Q_mr_pareto}
\end{equation}

where
\begin{equation}
C_0\triangleq s_1t_1(1-t_2)+s_2(1-t_1)t_2.
\label{eq:C0_def_pareto}
\end{equation}

Moreover, the imbalance $D=q_1-q_2$ is affine in $(m,r)$, $D=Am+Br$, with
\begin{equation}
\begin{aligned}
A&\triangleq
s_1t_1(1-t_2)+s_2(1-t_1)t_2+t_1t_2c,\\
B&\triangleq
s_1t_1(1-t_2)-s_2(1-t_1)t_2+t_1t_2(a-b).
\end{aligned}
\label{eq:AB_def_pareto}
\end{equation}

\subsection{Boundary Maximization at Fixed Imbalance}
\label{subsec:boundary_max_fixed_D}

The following lemma is the key geometric step. It shows that, for any
fixed feasible transmission intensities, the best way to improve the
update-rate pair without changing the split imbalance $D$ is to move to
the boundary of the conditional source-selection square.

\begin{lemma}[Boundary maximization at fixed imbalance]
\label{lem:boundary_fixed_D}
Assume $t_1,t_2>0$ and physically consistent channel
probabilities satisfying $a,b\geq0$ and
$\max\{a,b\}\leq c\leq a+b$.
For fixed $t_1,t_2$ and any feasible imbalance $D=q_1-q_2$,
the total update success $Q=q_1+q_2$ is maximized over
$p,q\in[0,1]$ at a point satisfying
\begin{equation}
p\in\{0,1\}
\quad\text{or}\quad
q\in\{0,1\}.
\label{eq:boundary_condition_pq}
\end{equation}
Equivalently, the maximizer lies on the boundary
$|m|+|r|=1$ of the diamond in \eqref{eq:diamond_constraint}.
\end{lemma}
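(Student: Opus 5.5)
The plan is to work entirely in the $(m,r)$ coordinates of \eqref{eq:diamond_constraint}, where the feasible set of $(p,q)$ for fixed $t_1,t_2>0$ is the diamond $|m|+|r|\le1$. We take the representations \eqref{eq:Q_mr_pareto} and $D=Am+Br$ from \eqref{eq:AB_def_pareto} as given.

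Fixing $D$ is the linear constraint $Am+Br=D$, so the fixed-imbalance feasible set is the intersection of a line with the diamond. This is a segment whose endpoints lie on $|m|+|r|=1$. In the degenerate case $A=B=0$ with $D=0$, the set is the whole diamond.

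On this set, $Q$ equals the constant $C_0+\tfrac{t_1t_2}{2}(a+b+c)$ plus $\kappa(r^2-m^2)$, where $\kappa\triangleq \tfrac{t_1t_2}{2}(a+b-c)\ge0$ by the hypothesis $c\le a+b$ (the footnote inclusion--exclusion).

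If $\kappa=0$, then $Q$ is constant on the feasible set. Every feasible point, in particular a boundary endpoint of the segment, is then a maximizer.

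If $\kappa>0$, it suffices to show that the quadratic $h(m,r)=r^2-m^2$ restricted to the segment attains its maximum at an endpoint. Parametrize the segment as $(m,r)=(m_0,r_0)+s(-B,A)$, $s\in[s_-,s_+]$. Then
\[
h=(r_0+sA)^2-(m_0-sB)^2,
\]
whose leading coefficient in $s$ is $A^2-B^2$.

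\textbf{Main obstacle.} The difficulty is the case $A^2<B^2$, where $h$ is strictly concave along the segment and could in principle peak in the interior. I would first show $A\ge|B|$ directly from \eqref{eq:AB_def_pareto}, using $c\ge\max\{a,b\}\ge|a-b|$ together with $s_2(1-t_1)t_2\ge0$:
\[
A-B=2s_2(1-t_1)t_2+t_1t_2(c-a+b)\ge0,\qquad A+B=2s_1t_1(1-t_2)+t_1t_2(c+a-b)\ge0.
\]
Hence $A^2-B^2\ge0$, so $h$ is convex in $s$ (possibly affine) and its maximum over $[s_-,s_+]$ is attained at $s_-$ or $s_+$. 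Both endpoints lie on $|m|+|r|=1$.

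In the degenerate cases I still need a boundary maximizer. If $A^2=B^2$, then $h$ is affine in $s$ and an endpoint works. If $A=B=0$, the feasible set is the whole diamond; maximizing the convex function $r^2-m^2$ over it gives $|r|=1$, $m=0$, which is a boundary vertex.

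Finally, translate back: $|m|+|r|=1$ with $p=(1+m+r)/2$ and $q=(1+m-r)/2$ means one of $\pm m\pm r=1$ holds, i.e.\ one of $p,q$ equals $0$ or $1$. This is exactly \eqref{eq:boundary_condition_pq}.
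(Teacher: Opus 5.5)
Your proposal is correct and follows essentially the paper's route: split on $a+b=c$ (where $Q$ is constant on the fixed-$D$ set) versus $a+b>c$, show $A\ge|B|$ from $c\ge\max\{a,b\}$ so that $r^2-m^2$ is convex along the fixed-$D$ segment, and conclude that the maximum sits at an endpoint on $|m|+|r|=1$. The differences are cosmetic. You parametrize the segment by the direction $(-B,A)$ rather than solving for $r$, which removes the paper's $B=0$ versus $B\neq0$ split. You also use the non-strict bound $A\ge|B|$ and treat the affine case $A^2=B^2$ separately, whereas the paper first deduces $a,b>0$ to obtain the strict bound $A>|B|$.
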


\begin{IEEEproof}
For fixed $t_1,t_2$, the constraint of fixed imbalance is the line
\[
Am+Br=D.
\]
The feasible set for $(m,r)$ is the diamond $|m|+|r|\leq1$. Therefore,
the feasible set at fixed $D$ is either empty, a point, or a compact
line segment as shown in Fig.~\ref{fig:fixed_D_boundary}.
\begin{figure}[]
    \centering
    \includegraphics[width=0.55\linewidth]{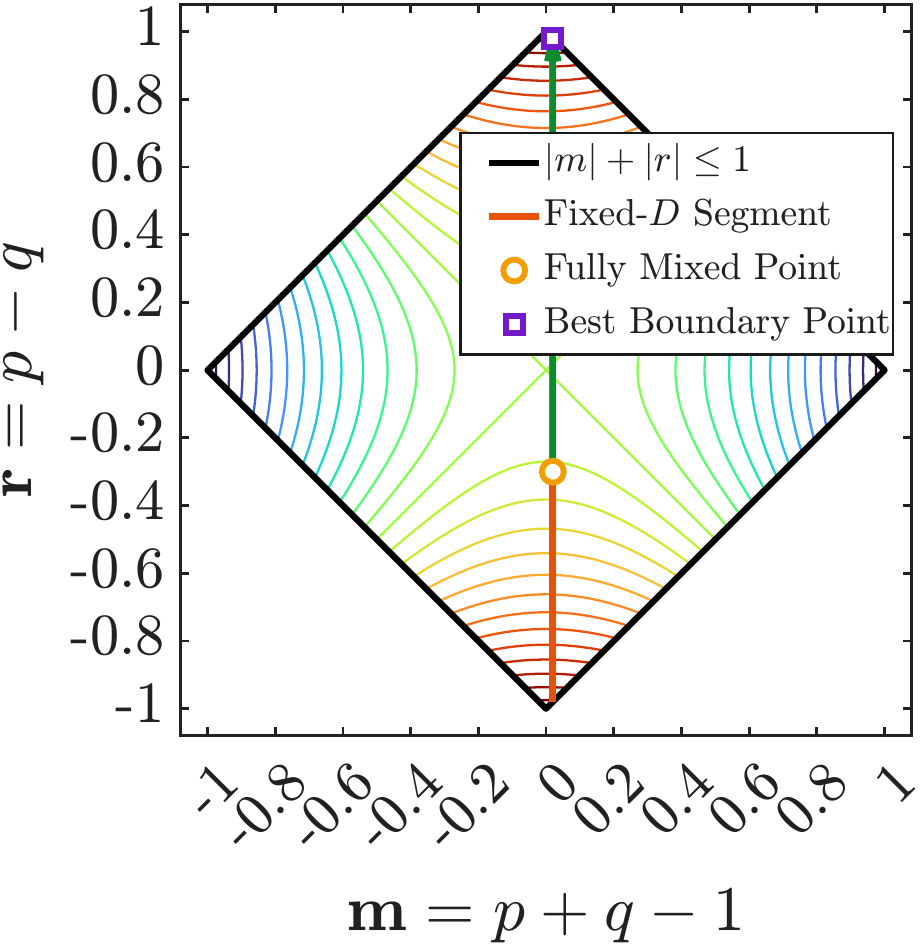}
    \caption{Fixed-$D$ segment and its intersection with the diamond constraint in \eqref{eq:diamond_constraint}.}
    \label{fig:fixed_D_boundary}
\end{figure}

First consider the case $a+b=c$. Then the coefficient
$a+b-c$ in \eqref{eq:Q_mr_pareto} vanishes,and hence \(Q=C_0+\frac{t_1t_2}{2}(a+b+c)\).
Therefore, all feasible points with the same imbalance $D$ have
the same total update success $Q$.

Because $D$ is feasible, the intersection of
$Am+Br=D$ with the diamond is nonempty. If $(A,B)\neq(0,0)$,
this intersection is a compact, possibly degenerate, line segment
whose endpoint or endpoints lie on the diamond boundary
$|m|+|r|=1$. If $A=B=0$, feasibility requires $D=0$, and any
point on the diamond boundary may be selected. Therefore, a boundary point exists with the same $D$ and the same $Q$.
Thus, the result holds when $a+b=c$, including the degenerate case
$A=|B|$.

In the remainder, assume $a+b>c$. Since
$c\geq\max\{a,b\}$, this strict inequality implies that
$a>0$ and $b>0$.

By \eqref{eq:Q_mr_pareto}, maximizing $Q$ over the fixed-$D$
feasible segment is then equivalent to maximizing
\[
r^2-m^2.
\]
We first show that $A>|B|$.
Since $c\geq\max\{a,b\}$, we have
$c+a-b\geq a>0$ and $c-a+b\geq b>0$. Therefore,
\begin{equation}
A+B
=
2s_1t_1(1-t_2)+t_1t_2(c+a-b)>0,
\label{eq:AplusB_general}
\end{equation}
and
\begin{equation}
A-B
=
2s_2(1-t_1)t_2+t_1t_2(c-a+b)>0.
\label{eq:AminusB_general}
\end{equation}
Thus, $A>|B|$.

If $B\neq0$, the line constraint gives
\[
r=\frac{D-Am}{B}.
\]
Substituting into $r^2-m^2$ yields
\[
\left(\frac{D-Am}{B}\right)^2-m^2,
\]
which is a strictly convex quadratic function of $m$ because
$A^2>B^2$. A strictly convex function attains its maximum over a
compact interval at an endpoint. If $B=0$, then the constraint fixes
$m=D/A$, and maximizing $r^2-m^2$ is equivalent to maximizing $r^2$,
which is also attained at an endpoint of the feasible segment.

Hence, in all cases, the maximum is attained at an endpoint of the
intersection between the line $Am+Br=D$ and the diamond
$|m|+|r|\leq1$. Such endpoints lie on the diamond boundary
$|m|+|r|=1$, which corresponds exactly to
$p=0$, $p=1$, $q=0$, or $q=1$.
\end{IEEEproof}

\subsection{No Fully Mixed Pareto-Optimal Source Selection}
\label{subsec:no_fully_mixed_pareto}

Lemma~\ref{lem:boundary_fixed_D} directly implies that fully mixed
conditional source selection is unnecessary for Pareto optimality.

\begin{theorem}[No fully mixed Pareto-optimal source selection]
\label{thm:no_fully_mixed_pareto}
Under the physically consistent channel conditions of
Lemma~\ref{lem:boundary_fixed_D},
consider any feasible independent policy satisfying
$0<t_1\leq\Gamma_1$, $0<t_2\leq\Gamma_2$, and
$p,q\in(0,1)$. Then there exists another feasible independent policy
with the same transmission intensities $t_1,t_2$, the same imbalance
$D=q_1-q_2$, and a weakly larger total update success $Q=q_1+q_2$.
Consequently, a Pareto-efficient independent policy can be chosen such
that
\begin{equation}
p\in\{0,1\}
\quad\text{or}\quad
q\in\{0,1\}.
\label{eq:no_fully_mixed_result}
\end{equation}
Thus, a fully mixed conditional source-selection policy is not required
for Pareto optimality under the sampling constraints.
\end{theorem}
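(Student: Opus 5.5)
The argument applies Lemma~\ref{lem:boundary_fixed_D} to the fixed-$(t_1,t_2)$ slice of the policy space and then turns the ``same $D$, larger $Q$'' conclusion into a Pareto statement through \eqref{eq:q_from_QD}.

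Fix a feasible policy with $0<t_1\le\Gamma_1$, $0<t_2\le\Gamma_2$ and $p,q\in(0,1)$, and let $D=q_1-q_2$ be its imbalance. The transmission intensities stay fixed, so every candidate $(p',q')\in[0,1]^2$ gives $u_1=t_1p'$, $u_2=t_1(1-p')$, $v_1=t_2q'$, $v_2=t_2(1-q')$. This policy satisfies $u_1+u_2=t_1\le\Gamma_1$ and $v_1+v_2=t_2\le\Gamma_2$, so it lies in $\mathcal A_\Gamma$; feasibility is therefore automatic on the whole square. The current $(m,r)$ lies in the interior of the diamond \eqref{eq:diamond_constraint}, and $D$ is feasible. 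Lemma~\ref{lem:boundary_fixed_D} then supplies a maximizer $(m^\star,r^\star)$ of $Q$ over the fixed-$D$ slice with $|m^\star|+|r^\star|=1$, which is equivalent to $p^\star\in\{0,1\}$ or $q^\star\in\{0,1\}$. This maximizer has the same $t_1,t_2$ and the same $D$, and its $Q$ is at least the original $Q$ because the original point belongs to the slice. That proves the first claim.

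For the Pareto claim, take any Pareto-efficient policy. If $t_1=0$ or $t_2=0$, then $p$ or $q$ is undefined and can be set to $0$ without changing $(q_1,q_2)$, since the corresponding $u$ or $v$ vector is zero. If $t_1,t_2>0$ and the policy is already on the boundary, there is nothing to prove. Otherwise, replace it by the boundary policy above. By \eqref{eq:q_from_QD}, keeping $D$ fixed while weakly increasing $Q$ weakly increases both $q_1=(Q+D)/2$ and $q_2=(Q-D)/2$. Pareto efficiency of the original point rules out a strict increase, so the new point has exactly the same $(q_1,q_2)$. It is therefore the same Pareto-efficient update-rate pair, now realized by a policy satisfying \eqref{eq:no_fully_mixed_result}.

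There is no real obstacle, since the lemma does the work. The one place needing care is the degenerate case $a+b=c$. There $Q$ is constant on the slice, and the boundary point comes from the segment-endpoint or $A=B=0$ argument inside the proof of the lemma, which already covers this case. The other point to check is that the maximum over the compact slice is attained, which follows from continuity of $Q$ and compactness of the slice.
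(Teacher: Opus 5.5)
Your proposal is correct and follows the paper's proof essentially step for step. You apply Lemma~\ref{lem:boundary_fixed_D} on the fixed-$(t_1,t_2)$, fixed-$D$ slice to obtain a boundary point with $\widetilde Q\ge Q$, then use $q_1=(Q+D)/2$ and $q_2=(Q-D)/2$ to show both update probabilities weakly increase. Your handling of the Pareto claim is slightly more careful than the paper's: you use Pareto efficiency to force the identical $(q_1,q_2)$ and you treat $t_1=0$ or $t_2=0$ separately, whereas the paper argues only through weak improvement of the objective.
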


\begin{IEEEproof}
Fix a feasible policy with $p,q\in(0,1)$ and transmission intensities
$t_1,t_2$. In the $(m,r)$ plane, this corresponds to an interior point
of the diamond. Fix the induced value of $D=q_1-q_2$. By
Lemma~\ref{lem:boundary_fixed_D}, there exists a boundary point
$(\widetilde m,\widetilde r)$ with the same $D$ and with
$\widetilde Q\geq Q$.

Since $D$ is unchanged, $\widetilde q_1=(\widetilde Q+D)/2$ and
$\widetilde q_2=(\widetilde Q-D)/2$. Therefore,
$\widetilde q_1\geq q_1$ and $\widetilde q_2\geq q_2$. The new policy has the same $t_1,t_2$, and hence still satisfies
$t_1\leq\Gamma_1$ and $t_2\leq\Gamma_2$. Since
$E_i(q_i)$ is strictly decreasing in $q_i$, the boundary policy weakly
improves the objective and strictly improves it whenever
$\widetilde Q>Q$. Hence no strictly Pareto-optimal policy requires
$p,q\in(0,1)$.
\end{IEEEproof}
\subsection{Reduction to Four Boundary Branches}
\label{subsec:four_boundary_branches}

Theorem~\ref{thm:no_fully_mixed_pareto} reduces the search over the
conditional source-selection square $(p,q)\in[0,1]^2$ to the four
boundary branches
\begin{equation}
p=1,\qquad p=0,\qquad q=1,\qquad q=0,
\label{eq:four_branches}
\end{equation}
on each of which one sensor is source-deterministic while the other
sensor may still randomize over its constrained action triangle. On a
fixed branch, with the transmission intensity of the
sensor fixed, the update-rate vector induced by
the other sensor is an affine image of its action probabilities, and
therefore lies in the convex hull of three vertex-induced update-rate
vectors $P^{(\cdot)}_0,P^{(\cdot)}_1,P^{(\cdot)}_2$. For each branch,
two of these three points lie on a common coordinate axis, so the
update-rate triangle is summarized by an axis point $A^{(\cdot)}$ and
an off-axis point $B^{(\cdot)}$. The detailed branch-by-branch
expressions for $P^{(\cdot)}_\ell$, $A^{(\cdot)}$, and $B^{(\cdot)}$,
together with the convex-hull representation, are collected in 
Appendix \ref{app:branch_expressions}; here we proceed directly to the branch-level Pareto
reduction.
\vspace{-13pt}

\subsection{Branch-Level Pareto Reduction}
\label{subsec:branch_level_reduction}

The previous branch descriptions show that, after fixing one boundary
branch and the transmission intensity of the source-deterministic
sensor, the remaining sensor induces a triangle in the update-rate plane.
The next result shows that only one edge of this triangle is relevant.

\begin{lemma}[Branch-level Pareto reduction]
\label{lem:branch_level_reduction}
Fix one of the four branches in \eqref{eq:four_branches} and fix the
transmission intensity of the sensor. Then the
other sensor can be chosen to randomize between at most two constrained
vertex actions. Equivalently, for fixed outer intensity, the
two-dimensional optimization over the other sensor's constrained action
triangle reduces to a one-dimensional optimization over the segment
joining the corresponding points $A$ and $B$.
\end{lemma}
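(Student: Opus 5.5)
We would prove the lemma branch by branch, starting with the branch $p=1$; the other three follow by the same argument with the roles of the sensors and/or sources exchanged. On $p=1$ with the outer intensity $t_1\in(0,\Gamma_1]$ fixed, we have $u_1=t_1$ and $u_2=0$. Substituting into \eqref{eq:q1_bilinear_ind}--\eqref{eq:q2_bilinear_ind} gives
\begin{equation*}
q_1 = s_1t_1 + (s_2+\gamma_{11}t_1)v_1 + \gamma_{12}t_1 v_2,\qquad
q_2 = \big(s_2(1-t_1)+bt_1\big)v_2 .
\end{equation*}
The map $(v_1,v_2)\mapsto(q_1,q_2)$ is therefore affine on the constrained triangle $\Delta_{\Gamma_2}$. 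Hence the induced update-rate set is the convex hull of the images of the three vertices $(0,0)$, $(\Gamma_2,0)$ and $(0,\Gamma_2)$, which are the points $P_0,P_1,P_2$ of Appendix~\ref{app:branch_expressions}.

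The first step is to check the structural claim used in Section~\ref{subsec:four_boundary_branches}. On this branch the vertices $P_0=(s_1t_1,0)$ and $P_1=(s_1t_1+\Gamma_2(s_2+\gamma_{11}t_1),0)$ both lie on the axis $q_2=0$. The third vertex $P_2$ has second coordinate $\Gamma_2(s_2(1-t_1)+bt_1)\ge0$. We define $A$ to be whichever of $P_0,P_1$ has the larger $q_1$, and set $B\triangleq P_2$. The analogous computation on the branches $p=0$, $q=1$ and $q=0$ should produce two vertices on a common coordinate axis in each case. Part of the work is to verify this explicitly, which amounts to noting that one source receives no contribution when the free sensor is idle or transmits the same source as the deterministic sensor.

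The second step is a dominance argument inside the triangle $\mathrm{conv}\{A,A',B\}$, where $A'$ denotes the other axis vertex. Take any point $x$ in the triangle with second coordinate $h$. The horizontal slice $\{y\in\mathrm{conv}\{A,A',B\}:y_2=h\}$ is a segment. Because $A$ is the axis vertex with the larger $q_1$, the right endpoint of this segment lies on the edge $[A,B]$. That endpoint weakly dominates $x$: it has the same $q_2$ and a larger or equal $q_1$. Every point of the triangle is therefore weakly dominated by a point of $[A,B]$.

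Combining this with strict monotonicity of $E_i$ (Theorem~\ref{thm:rte_monotonicity_curvature}) and the argument of Theorem~\ref{thm:pareto_localization_journal}, the minimum of $F$ over the branch at fixed outer intensity equals its minimum over $[A,B]$. Points of $[A,B]$ are convex combinations of two vertex images. Since the map is affine, each such point is realized by the same convex combination of the two corresponding vertex actions of $\Delta_{\Gamma_2}$. This shows that the free sensor randomizes between at most two constrained vertex actions, which is the claim of the lemma.

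The main obstacle is handling degenerate configurations uniformly across the four branches. These include $t_1=1$ or $b=0$, where $B$ also falls on the axis; ties $P_0=P_1$; and the case where $B$ itself dominates $A$. In each of these the triangle collapses to a segment or the frontier shrinks to a point. The argument survives because the right endpoint of each slice still lies on $[A,B]$, possibly at $A$ or at $B$. Stating the slice argument in terms of these right endpoints, rather than relying on the triangle being nondegenerate, should make it cover all cases at once.
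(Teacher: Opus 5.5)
Your proposal is correct and follows essentially the same route as the paper: affinity of $(q_1,q_2)$ in the free sensor's action gives the triangle $\operatorname{conv}\{P_0,P_1,P_2\}$, the two axis vertices are replaced by the larger one $A$, every point is weakly dominated by a point of the segment $[A,B]$, and monotonicity of $F$ together with affinity yields randomization between at most two constrained vertex actions. Your horizontal-slice argument is a more explicit version of the paper's one-line dominance step, where the axis part of any convex combination is replaced by $A$. It also handles the degenerate configurations, such as $B$ on the axis or ties, cleanly.
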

\begin{figure}[]
    \centering
    \includegraphics[width=0.78\linewidth]{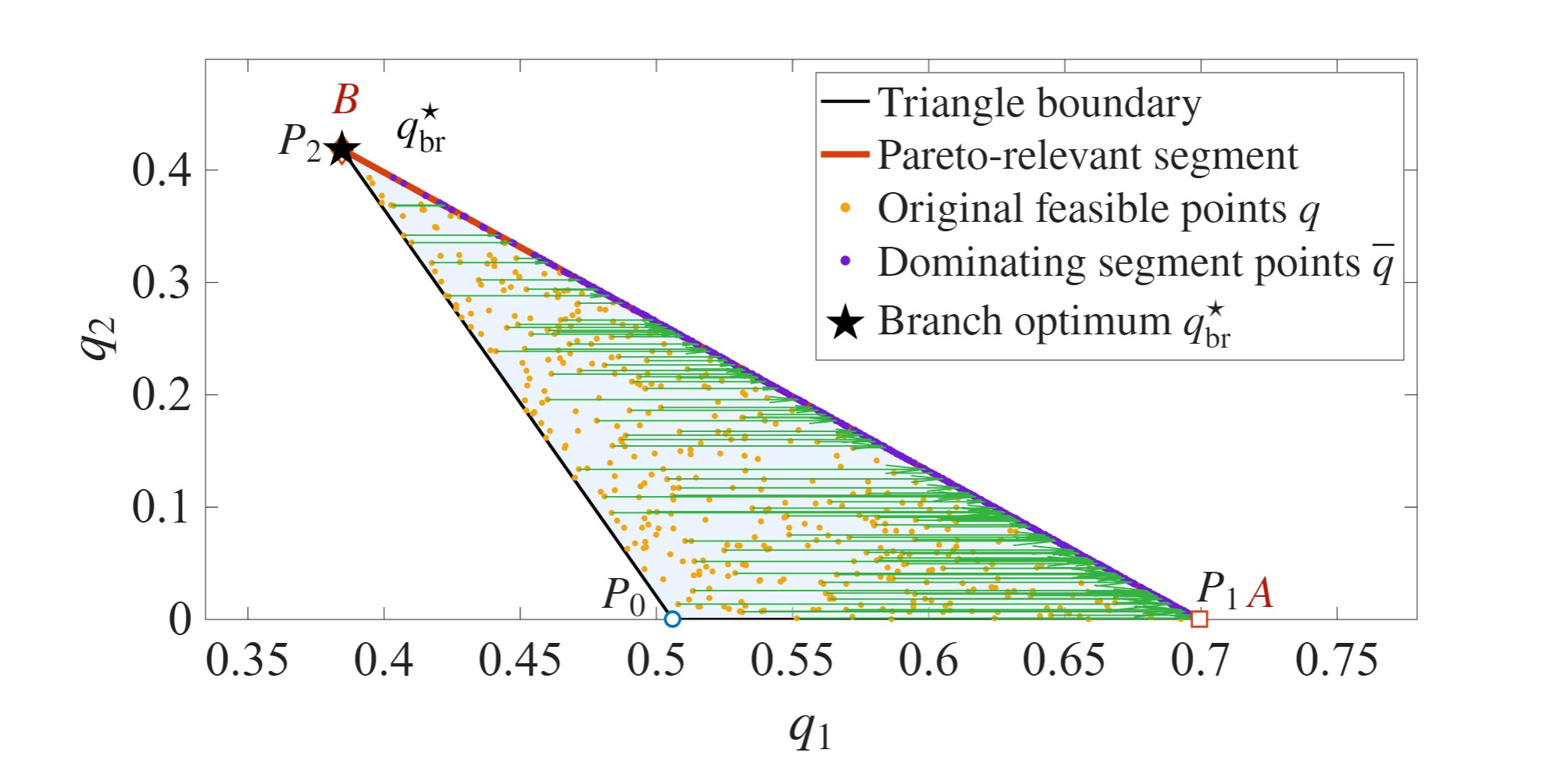}
\caption{Branch-level Pareto reduction for branch $p=1$. The feasible
triangle $\operatorname{conv}\{P_0,P_1,P_2\}$ reduces to the Pareto-relevant
segment $A$--$B$, on which the branch optimum $q^\star$ lies.}
    \label{fig:branch_pareto_reduction}
\end{figure}
\begin{IEEEproof}
For fixed outer intensity, the update-rate set generated by the other
sensor is the convex hull of three points
$\operatorname{conv}\{P_0,P_1,P_2\}$. In each of the four branches, two
of these points lie on the same coordinate axis. Any convex combination
of those two axis points is weakly dominated by the larger axis point
defined as $A$. Therefore, every point in the triangle is weakly
dominated by a point on the segment joining $A$ and the off-axis point
$B$. Since $F(q_1,q_2)$ is strictly decreasing in both coordinates,
a dominated point cannot be strictly optimal. Hence, for fixed outer
intensity, an optimal point can be chosen on the segment between $A$ and
$B$, which corresponds to randomizing between at most two constrained
vertex actions of the other sensor.
\end{IEEEproof}

\begin{lemma}[Closed-form candidate on a Pareto segment]
\label{lem:pareto_segment_candidate}
Let \(U=(U_1,U_2)\) and \(V=(V_1,V_2)\) be two Pareto-relevant
update-rate points ordered such that \(U_1\ge V_1\) and \(U_2\le V_2\).
Parameterize the segment between them as
\begin{equation}
q(\theta)=(1-\theta)U+\theta V,\qquad 0\le \theta\le 1 .
\label{eq:segment_param_general}
\end{equation}
Equivalently,
\begin{equation}
q_1(\theta)=U_1-L_1\theta,\qquad
q_2(\theta)=U_2+L_2\theta,
\label{eq:segment_q_affine}
\end{equation}
where \(L_1=U_1-V_1\ge0\) and \(L_2=V_2-U_2\ge0\). If
\(L_1=0\) or \(L_2=0\), the segment minimum is obtained by checking the
endpoints. Otherwise, define
\begin{equation}
\eta_{\mathrm e}
\triangleq
\sqrt{\frac{L_2w_2K_2}{L_1w_1K_1}},
\label{eq:eta_segment_candidate}
\end{equation}
and
\begin{equation}
d_1^U\triangleq 1-\lambda_1+\lambda_1U_1,\qquad
d_2^U\triangleq 1-\lambda_2+\lambda_2U_2 .
\label{eq:dU_segment_candidate}
\end{equation}
If \(\lambda_2L_2+\eta_{\mathrm e}\lambda_1L_1\neq0\), the only
interior stationary candidate is
\begin{equation}
\theta^\circ
=
\frac{\eta_{\mathrm e}d_1^U-d_2^U}
{\lambda_2L_2+\eta_{\mathrm e}\lambda_1L_1}.
\label{eq:theta_segment_candidate}
\end{equation}
Hence, the segment minimum is obtained by evaluating
\begin{equation}
\theta\in
\{0,1\}
\cup
\{\theta^\circ:0<\theta^\circ<1\}.
\label{eq:theta_segment_candidate_set}
\end{equation}
\end{lemma}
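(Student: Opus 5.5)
The argument mirrors the proof of Theorem~\ref{thm:dominant_closed_form_journal}, now on a general segment. First, restrict the objective to the segment: define $\phi(\theta)\triangleq w_1E_1(U_1-L_1\theta)+w_2E_2(U_2+L_2\theta)$ for $\theta\in[0,1]$. Both $U$ and $V$ lie in $[0,1]^2$, so every point on the segment does too. Hence $D_i(q_i(\theta))>0$, and $\phi$ is continuous and differentiable on $[0,1]$ by Theorem~\ref{thm:rte_monotonicity_curvature}.

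The degenerate cases are handled first. If $L_1=0$, then $q_1$ is constant and $q_2$ is nondecreasing in $\theta$. Since $E_2$ is strictly decreasing, $\phi$ is monotone and its minimum sits at an endpoint. The case $L_2=0$ is symmetric. So in these cases checking $\theta\in\{0,1\}$ suffices.

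For $L_1,L_2>0$, differentiate using \eqref{eq:Ei_first_derivative_journal}:
\[
\phi'(\theta)=\frac{w_1K_1L_1}{D_1(q_1(\theta))^2}-\frac{w_2K_2L_2}{D_2(q_2(\theta))^2}.
\]
Setting $\phi'(\theta)=0$ and taking positive square roots (valid because $D_i>0$ and $K_i>0$) gives
\[
D_2(q_2(\theta))=\eta_{\mathrm e}D_1(q_1(\theta)).
\]
Both sides are affine in $\theta$:
\[
D_1(q_1(\theta))=d_1^U-\lambda_1L_1\theta,\qquad D_2(q_2(\theta))=d_2^U+\lambda_2L_2\theta.
\]
The equation therefore becomes
\[
d_2^U+\lambda_2L_2\theta=\eta_{\mathrm e}d_1^U-\eta_{\mathrm e}\lambda_1L_1\theta .
\]
When $\lambda_2L_2+\eta_{\mathrm e}\lambda_1L_1\neq0$, this has the unique root \eqref{eq:theta_segment_candidate}. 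This establishes that $\theta^\circ$ is the only interior stationary candidate. Note that the square-root step introduces no spurious roots: the negative root $D_2=-\eta_{\mathrm e}D_1$ is impossible because both $D_i$ are positive.

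Finally, $\phi$ is continuous on the compact interval $[0,1]$, so its minimum is attained. It is attained either at an endpoint or at an interior point where $\phi'=0$, and such a point must equal $\theta^\circ$ when $\theta^\circ\in(0,1)$. This gives the candidate set \eqref{eq:theta_segment_candidate_set}.

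The only delicate step is justifying the square-root reduction and the positivity of $D_i$ along the whole segment; both follow from $|\lambda_i|<1$ and $q_i\in[0,1]$. The remaining steps are routine algebra. If the denominator vanishes, the affine equation is either inconsistent, so there are no stationary points, or holds identically, so $\phi$ is constant; in either case the endpoints suffice. That boundary case can be noted as a remark.
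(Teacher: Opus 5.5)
Your proof is correct and follows the paper's argument: restrict $F$ to the segment, use $E_i'=-K_i/D_i^2$ to get the stationarity condition, take the positive square root to obtain $D_2=\eta_{\mathrm e}D_1$, solve the resulting affine equation for $\theta^\circ$, and conclude that on $[0,1]$ the minimum is at an endpoint or an interior stationary point. You also fill in details the paper leaves implicit: the degenerate cases $L_1=0$ or $L_2=0$ (handled by monotonicity), positivity of $D_i$ along the segment, and a vanishing denominator, which gives either no stationary point or a constant $\phi$.
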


\begin{IEEEproof}
For \(L_1,L_2>0\), the restricted objective is
\[
\phi(\theta)=w_1E_1(q_1(\theta))+w_2E_2(q_2(\theta)).
\]
Using \(E_i'(q)=-K_i/D_i(q)^2\), an interior stationary point satisfies
\(L_1w_1K_1/D_1(q_1(\theta))^2=L_2w_2K_2/D_2(q_2(\theta))^2\).
Since \(D_i(q)>0\), taking the positive square root gives
\(D_2(q_2(\theta))=\eta_{\mathrm e}D_1(q_1(\theta))\).
Moreover,
\[
D_1(q_1(\theta))=d_1^U-\lambda_1L_1\theta,
\qquad
D_2(q_2(\theta))=d_2^U+\lambda_2L_2\theta .
\]
Solving the resulting affine equation gives
\eqref{eq:theta_segment_candidate}. Since the minimum on a compact
segment is attained either at an endpoint or at an interior stationary
point, the candidate set in \eqref{eq:theta_segment_candidate_set}
is sufficient.
\end{IEEEproof}
\vspace{-13pt}

\subsection{Branch Value Functions}
\label{subsec:branch_value_functions}

For each boundary branch and fixed outer transmission intensity,
Lemma~\ref{lem:branch_level_reduction} reduces the feasible update-rate
set to a Pareto-relevant segment with ordered endpoints \(U\) and \(V\).
The exact minimizer on this segment is obtained from
Lemma~\ref{lem:pareto_segment_candidate}.

For example, on the branch \(p=1\), let
\(U^{(p=1)}(x)\) and \(V^{(p=1)}(x)\) denote the ordered nondominated
endpoints generated at fixed \(x\in[0,\Gamma_1]\). The branch value is
\begin{equation}
\Phi_{p=1}(x)
=
\min_{0\leq\theta\leq1}
F\big((1-\theta)U^{(p=1)}(x)+\theta V^{(p=1)}(x)\big),
\label{eq:Phi_p1}
\end{equation}
where the inner minimum is evaluated using
Lemma~\ref{lem:pareto_segment_candidate}. The remaining branches are defined analogously, with
\(F_{p=r}^\star\triangleq\min_{0\leq x\leq\Gamma_1}\Phi_{p=r}(x)\)
and
\(F_{q=r}^\star\triangleq\min_{0\leq z\leq\Gamma_2}\Phi_{q=r}(z)\),
\(r\in\{0,1\}\). Hence, the independent optimum is
\begin{equation}
F_{\mathrm{ind}}^\star(\Gamma)
=
\min
\left\{
F_{p=1}^\star,\,
F_{p=0}^\star,\,
F_{q=1}^\star,\,
F_{q=0}^\star
\right\}.
\label{eq:Find_branch_min}
\end{equation}

Fig.~\ref{fig:branch_pareto_reduction} illustrates the branch-level reduction for
the branch $p=1$ at a fixed outer intensity. Although the remaining sensor
generates the full triangle $\operatorname{conv}\{P_0,P_1,P_2\}$, every
interior point $q$ is weakly dominated by a feasible replacement point
$\bar q$ on the segment $A$--$B$.

Consequently, the branch optimization does not need to search over the
two-dimensional triangle. It is sufficient to optimize over the
one-dimensional Pareto-relevant segment, where the optimal point
$q^\star$ is attained.
\vspace{-10pt}
\section{Coordinated MPR Time Sharing and Two-Mode Optimality}
\label{sec:coordinated_mpr}
\vspace{-5pt}

The independent policies studied in the previous sections impose a
factorization on the joint action probabilities. We now introduce a
coordinated time-sharing policy class in which a scheduler directly
selects the joint action of the two sensors.
\vspace{-18pt}

\subsection{Coordinated Joint-Action Policy}
\label{subsec:coordinated_policy}

Let \begin{equation}
\tau_{rs}\triangleq
\Pr\{a_1(t)=r,\;a_2(t)=s\},
\qquad r,s\in\{0,1,2\},
\label{eq:tau_def}
\end{equation}
where action $0$ denotes silence, action $1$ denotes transmission of
source $X_1$, and action $2$ denotes transmission of source $X_2$. The
coordinated policy $\tau=\{\tau_{rs}:r,s\in\{0,1,2\}\}$ is nonnegative and
satisfies a simplex constraint together with per-sensor sampling-budget
constraints, giving the  feasible set
\begin{equation}
\begin{aligned}
\mathcal T_\Gamma
\triangleq
\bigg\{
\tau\in\mathbb R_+^9:\;&
\sum_{r=0}^{2}\sum_{s=0}^{2}\tau_{rs}=1,\\
&
\sum_{r=1}^{2}\sum_{s=0}^{2}\tau_{rs}\leq\Gamma_1,\;
\sum_{r=0}^{2}\sum_{s=1}^{2}\tau_{rs}\leq\Gamma_2
\bigg\}.
\end{aligned}
\label{eq:Tgamma_coordinated}
\end{equation}

Each deterministic joint action $(r,s)$ induces an update-rate vector
$d^{rs}=(d_1^{rs},d_2^{rs})$. Using the channel notation already defined,
these vectors are

\begin{equation}
\begin{gathered}
d^{00}=(0,0),\quad
d^{10}=(s_1,0),\quad d^{20}=(0,s_1),\\
d^{01}=(s_2,0),\quad d^{02}=(0,s_2),\quad
d^{11}=(c,0),\\
d^{22}=(0,c),\quad
d^{12}=(a,b),\quad d^{21}=(b,a).
\end{gathered}
\label{eq:joint_action_vectors}
\end{equation}

Therefore, under a coordinated policy $\tau$, the effective update
probabilities are linear in $\tau$
\begin{equation}
q_i(\tau)
=
\sum_{r=0}^{2}\sum_{s=0}^{2}\tau_{rs}d_i^{rs},
\qquad i\in\{1,2\}.
\label{eq:q_tau_general}
\end{equation}
Equivalently,
\begin{equation}
q_1(\tau)
=
s_1\tau_{10}
+s_2\tau_{01}
+c\tau_{11}
+a\tau_{12}
+b\tau_{21},
\label{eq:q1_tau}
\end{equation}
\begin{equation}
q_2(\tau)
=
s_1\tau_{20}
+s_2\tau_{02}
+c\tau_{22}
+b\tau_{12}
+a\tau_{21}.
\label{eq:q2_tau}
\end{equation}

The coordinated sampling-constrained problem is
\begin{equation}
F_{\mathrm c}^{\star}(\Gamma)
=
\min_{\tau\in\mathcal T_\Gamma}
F(q_1(\tau),q_2(\tau)),
\label{eq:coordinated_problem_tau}
\end{equation}
where
\(F(q_1,q_2)=w_1E_1(q_1)+w_2E_2(q_2)\).
\vspace{-9pt}

\subsection{Coordinated Achievable Update-Rate Region}
\label{subsec:coordinated_region}

Define the coordinated achievable update-rate region as
\begin{equation}
\mathcal Q^c_\Gamma
\triangleq
\left\{
(q_1(\tau),q_2(\tau)):\tau\in\mathcal T_\Gamma
\right\}.
\label{eq:Qc_gamma}
\end{equation}
Then \eqref{eq:coordinated_problem_tau} can be written in the update-rate
plane as
\begin{equation}
F_{\mathrm c}^{\star}(\Gamma)
=
\min_{(q_1,q_2)\in\mathcal Q^c_\Gamma}
F(q_1,q_2).
\label{eq:coordinated_problem_q}
\end{equation}

\begin{lemma}[Coordinated update-rate polygon]
\label{lem:coordinated_polygon}
The set $\mathcal Q^c_\Gamma$ is a compact convex polygon in
$\mathbb R^2$.
\end{lemma}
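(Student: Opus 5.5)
The approach is to recognize $\mathcal Q^c_\Gamma$ as the image of a polytope under a linear map, and then use the standard fact that linear images of polytopes are polytopes.

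\emph{Step 1: $\mathcal T_\Gamma$ is a polytope.} By \eqref{eq:Tgamma_coordinated}, $\mathcal T_\Gamma$ is the intersection of finitely many closed half-spaces in $\mathbb R^9$: nine nonnegativity constraints, the simplex equality written as two opposite inequalities, and the two budget inequalities. Hence it is a closed convex polyhedron. It is bounded because $\tau\in\mathbb R_+^9$ with $\sum_{r,s}\tau_{rs}=1$ forces $0\le\tau_{rs}\le1$. It is nonempty because $\tau_{00}=1$ is feasible, since both budget sums then vanish and $\Gamma_k>0$. So $\mathcal T_\Gamma$ is a nonempty compact convex polytope, and by the Minkowski--Weyl theorem it is the convex hull of its finitely many vertices.

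\emph{Step 2: linear image.} By \eqref{eq:q_tau_general}, the map $L:\tau\mapsto(q_1(\tau),q_2(\tau))=\sum_{r,s}\tau_{rs}d^{rs}$ is linear from $\mathbb R^9$ to $\mathbb R^2$. Compactness of $\mathcal Q^c_\Gamma=L(\mathcal T_\Gamma)$ follows from continuity of $L$ on a compact set. Convexity follows from linearity: $L(\mu\tau+(1-\mu)\tau')=\mu L(\tau)+(1-\mu)L(\tau')$. For the polygon property, I will use that $L$ maps convex hulls to convex hulls. This gives
\[
\mathcal Q^c_\Gamma=\operatorname{conv}\{L(\tau^{(1)}),\dots,L(\tau^{(N)})\},
\]
where $\tau^{(j)}$ are the vertices of $\mathcal T_\Gamma$. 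A convex hull of finitely many points in $\mathbb R^2$ is a compact convex polygon, possibly degenerate, i.e., a segment or a point when channel probabilities vanish. I will note this degenerate possibility explicitly.

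\emph{Optional explicit description.} The one step needing more than bookkeeping is giving the generating points concretely, which is useful for the later two-mode result. The vertices of $\mathcal T_\Gamma$ are obtained by making some subset of the defining inequalities tight. I would argue that each vertex has at most three nonzero coordinates, since there are at most three non-sign constraints: the simplex equality and the two budgets. Each vertex is therefore a mixture of at most three joint actions $(r,s)$ with weights fixed by the tight budgets. So the generating points are of the form $\sum\tau_{rs}d^{rs}$ over small supports, with weights in $\{1,\Gamma_k,1-\Gamma_k,\Gamma_1+\Gamma_2-1,\dots\}$.

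This explicit enumeration is not needed for the lemma itself. I will keep the proof to Steps 1--2 and relegate the vertex list to a remark.
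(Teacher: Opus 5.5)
Your proposal is correct and follows the same route as the paper: $\mathcal T_\Gamma$ is a compact polytope, and $\mathcal Q^c_\Gamma$ is its image under the linear map $\tau\mapsto\sum_{r,s}\tau_{rs}d^{rs}$, hence a compact convex polygon. Your extra details fill in what the paper's one-line proof leaves implicit: boundedness from the simplex constraint, nonemptiness via $\tau_{00}=1$, the Minkowski--Weyl step, and the remark on possible degeneracy.
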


\begin{IEEEproof}
The set $\mathcal T_\Gamma$ is a compact polytope because it is defined
by finitely many linear equalities and inequalities. The mapping
$\tau\mapsto(q_1(\tau),q_2(\tau))$ is linear by
\eqref{eq:q_tau_general}. Hence, $\mathcal Q^c_\Gamma$ is the linear
image of a compact polytope. Therefore, it is a compact convex polytope
in $\mathbb R^2$, i.e., a compact convex polygon.
\end{IEEEproof}

\begin{corollary}[Pareto-frontier localization]
\label{cor:coordinated_pareto_localization}
Assume $w_1,w_2>0$. Every global minimizer of
\eqref{eq:coordinated_problem_q} lies on the Pareto frontier of
$\mathcal Q^c_\Gamma$.
\end{corollary}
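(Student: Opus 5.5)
The argument is the same contradiction used for Theorem~\ref{thm:pareto_localization_journal}, now applied to the coordinated region $\mathcal Q^c_\Gamma$ in place of $\mathcal Q_\Gamma$. Before that, I would check that a global minimizer of \eqref{eq:coordinated_problem_q} exists, so the statement is not vacuous. By Lemma~\ref{lem:coordinated_polygon}, $\mathcal Q^c_\Gamma$ is compact and contained in $[0,1]^2$. The objective $F$ is continuous on $[0,1]^2$: each $E_i$ is continuous on $(0,1]$ by \eqref{eq:Ei_compact_journal}, since $D_i>0$, and it extends continuously to $q_i=0$ by Remark~\ref{rem:zero_update_endpoint}. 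Weierstrass then gives a minimizer.

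For the main step, suppose $q^\star\in\mathcal Q^c_\Gamma$ is a global minimizer that is Pareto dominated. Then there is some $\widetilde q\in\mathcal Q^c_\Gamma$ with $\widetilde q_i\ge q_i^\star$ for both $i$, and strict inequality for at least one index $j$.
\begin{itemize}
\item By Theorem~\ref{thm:rte_monotonicity_curvature}, each $E_i$ is strictly decreasing on $[0,1]$. This uses the continuous extension at $0$, so it also covers points with a zero coordinate.
\item Hence $E_i(\widetilde q_i)\le E_i(q_i^\star)$ for both $i$, and $E_j(\widetilde q_j)<E_j(q_j^\star)$.
\item Because $w_1,w_2>0$, this gives $F(\widetilde q)<F(q^\star)$, which contradicts the optimality of $q^\star$.
\end{itemize}
So every global minimizer is nondominated, that is, it lies on the Pareto frontier of $\mathcal Q^c_\Gamma$.

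No step is a real obstacle. The only point needing care is the zero-rate boundary, where strict monotonicity must hold for the continuously extended $E_i$. This is already guaranteed by Theorem~\ref{thm:rte_monotonicity_curvature}, because $E_i'<0$ on all of $[0,1]$.
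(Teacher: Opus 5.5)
Your proposal is correct and uses the same argument as the paper. A dominated minimizer would be strictly improved by the point that dominates it, because each $E_i$ is strictly decreasing and $w_1,w_2>0$, which is a contradiction. Your extra existence check (compactness from Lemma~\ref{lem:coordinated_polygon} plus continuity of $F$ on $[0,1]^2$) does not appear in the paper's proof and is not needed for the implication, but it correctly shows the statement is not vacuous.
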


\begin{IEEEproof}
Suppose that $q^\star=(q_1^\star,q_2^\star)$ is a global minimizer and
is dominated by another point
$\widetilde q=(\widetilde q_1,\widetilde q_2)\in\mathcal Q^c_\Gamma$.
Then $\widetilde q_i\geq q_i^\star$ for $i\in\{1,2\}$, with at least
one strict inequality. Since each $E_i(q_i)$ is strictly decreasing,
$F(\widetilde q_1,\widetilde q_2)<F(q_1^\star,q_2^\star)$
which contradicts the optimality of $q^\star$. Hence no dominated point
can be globally optimal.
\end{IEEEproof}
\vspace{-11pt}
\subsection{Two-Mode Optimality}
\label{subsec:two_mode_optimality}

A Pareto-extreme point of $\mathcal Q^c_\Gamma$ is a vertex of
$\mathcal Q^c_\Gamma$ that lies on its Pareto frontier. For each
Pareto-extreme point, choose any feasible coordinated policy in
$\mathcal T_\Gamma$ that induces it; we refer to such a policy as a
Pareto-extreme coordinated mode.

\begin{theorem}[Two-mode optimality of coordinated MPR]
\label{thm:two_mode_optimality}
There exists a globally optimal coordinated policy whose induced
update-rate point lies either at a Pareto-extreme point of
$\mathcal Q^c_\Gamma$ or on a Pareto edge connecting two adjacent
Pareto-extreme points. Consequently, an optimal update-rate vector can be
achieved by time sharing between at most two Pareto-extreme coordinated
modes.
\end{theorem}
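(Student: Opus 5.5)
The plan is to work entirely in the update-rate plane, using Lemma~\ref{lem:coordinated_polygon} and Corollary~\ref{cor:coordinated_pareto_localization}. First, $F$ is continuous on $[0,1]^2$: each $E_i$ is continuous on $[0,1]$ with the extension of Remark~\ref{rem:zero_update_endpoint}, since $D_i>0$ there. Since $\mathcal Q^c_\Gamma$ is compact and nonempty, a global minimizer $q^\star$ of \eqref{eq:coordinated_problem_q} exists. By Corollary~\ref{cor:coordinated_pareto_localization}, $q^\star$ lies on the Pareto frontier $\partial_{\mathrm P}\mathcal Q^c_\Gamma$.

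The next step is to describe this frontier. Because $\mathcal Q^c_\Gamma$ is a compact convex polygon, each nondominated point lies on its topological boundary. Indeed, if $q$ were interior, the point $q+\epsilon(1,1)$ would be feasible for small $\epsilon>0$ and would dominate $q$. The boundary of a polygon is a finite union of edges joining consecutive vertices.

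I then claim that a frontier point $q^\star$ either is a vertex or lies in the relative interior of a boundary edge $[V,V']$ with both endpoints on the frontier. If $q^\star$ is a vertex, it is Pareto-extreme by definition. Otherwise $q^\star=(1-\theta)V+\theta V'$ with $0<\theta<1$.

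To show $V$ and $V'$ are nondominated, argue by contradiction. Suppose some $\tilde V\in\mathcal Q^c_\Gamma$ dominates $V$. Then $(1-\theta)\tilde V+\theta V'$ lies in $\mathcal Q^c_\Gamma$ by convexity. It dominates $q^\star$, since $(1-\theta)>0$ and the domination is componentwise with one strict inequality. This contradicts nondominance of $q^\star$, and the same argument applies to $V'$. Hence $V,V'$ are Pareto-extreme points, and they are adjacent because they are consecutive vertices of the polygon.

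The step I expect to need the most care is the adjacency claim. It relies on the edge $[V,V']$ being a genuine edge, so that no other vertex lies between $V$ and $V'$. This follows by choosing the edge of the boundary containing $q^\star$. I would also handle the degenerate cases where the polygon is a segment or a point; there the same argument applies with the segment's endpoints, or trivially.

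Finally, I translate back to policies. For each Pareto-extreme point $V$, choose a mode $\tau^V\in\mathcal T_\Gamma$ inducing it, as in the definition. Set $\tau^\star=(1-\theta)\tau^V+\theta\tau^{V'}$. This is feasible because $\mathcal T_\Gamma$ is convex, being defined by linear constraints. Since $q(\tau)$ is linear in $\tau$ by \eqref{eq:q_tau_general}, $\tau^\star$ induces exactly $q^\star$. Thus $\tau^\star$ attains $F_{\mathrm c}^\star(\Gamma)$ and is a time sharing between at most two Pareto-extreme coordinated modes, which proves the theorem.
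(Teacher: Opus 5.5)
Your proposal is correct and follows the same route as the paper. You localize a minimizer on the Pareto frontier via Corollary~\ref{cor:coordinated_pareto_localization}, use the vertex/edge structure of the polygon $\mathcal Q^c_\Gamma$ from Lemma~\ref{lem:coordinated_polygon}, and realize an edge point by $(1-\theta)\tau^V+\theta\tau^{V'}$, which is feasible by convexity of $\mathcal T_\Gamma$ and induces the correct point by linearity of \eqref{eq:q_tau_general}. Your version is in fact more complete than the paper's: the paper takes existence of a minimizer for granted and only asserts that ``restricting to the nondominated boundary'' makes the edge endpoints Pareto-extreme, whereas you prove existence by compactness and continuity and prove the endpoint claim by replacing $V$ with a dominating $\tilde V$ in $(1-\theta)\tilde V+\theta V'$.
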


\begin{IEEEproof}
By Corollary~\ref{cor:coordinated_pareto_localization}, a global
minimizer lies on the Pareto frontier of $\mathcal Q^c_\Gamma$. By
Lemma~\ref{lem:coordinated_polygon}, $\mathcal Q^c_\Gamma$ is a compact
convex polygon. Every boundary point of a polygon is either a vertex or
lies on an edge connecting two adjacent vertices. Restricting this
statement to the nondominated boundary gives the claim for
Pareto-extreme points and Pareto edges.

If the optimal point lies on an edge between two Pareto-extreme points
$U$ and $V$, then it can be written as $(1-\theta)U+\theta V$ for some
$\theta\in[0,1]$. Let $\tau^U,\tau^V\in\mathcal T_\Gamma$ be feasible
coordinated policies that induce $U$ and $V$, respectively. Since
$\mathcal T_\Gamma$ is convex,
$\tau^\theta=(1-\theta)\tau^U+\theta\tau^V$
is feasible and induces $(1-\theta)U+\theta V$ by linearity of
\eqref{eq:q_tau_general}. Therefore, two Pareto-extreme coordinated
modes are sufficient.
\end{IEEEproof}

Theorem~\ref{thm:two_mode_optimality} reduces the coordinated
optimization to the Pareto vertices and Pareto edges of
\(\mathcal Q_\Gamma^c\). For any Pareto edge with ordered endpoints
\(U=(U_1,U_2)\) and \(V=(V_1,V_2)\), where \(U_1\ge V_1\) and
\(U_2\le V_2\), the edge minimizer is obtained directly from
Lemma~\ref{lem:pareto_segment_candidate}. Hence, the coordinated
optimization can be solved by evaluating all Pareto vertices and the
closed-form candidate \(\theta^\circ\) on each Pareto edge.

\vspace{-16pt}

\subsection{Pareto-Edge Search}
\label{subsec:finite_pareto_edge_search}

The previous results lead to a finite search over the coordinated Pareto
frontier.

\begin{algorithm}[t]
\caption{Pareto-Edge Search for Coordinated MPR Time Sharing}
\label{alg:coordinated_pareto_edge_search}
\begin{algorithmic}[1]
\STATE \textbf{Input:} Source parameters $\alpha_i,\beta_i,w_i$, channel parameters $s_1,s_2,a,b,c$, and budgets $\Gamma_1,\Gamma_2$.
\STATE Construct the coordinated feasible set $\mathcal T_\Gamma$ in \eqref{eq:Tgamma_coordinated}.
\STATE Compute the coordinated update-rate polygon $\mathcal Q^c_\Gamma$ in \eqref{eq:Qc_gamma}.
\STATE Extract the Pareto vertices $\mathcal V_{\mathrm P}^c$ and Pareto edges $\mathcal E_{\mathrm P}^c$ of $\mathcal Q^c_\Gamma$.
\STATE Initialize $\mathcal C\leftarrow\mathcal V_{\mathrm P}^c$.
\FOR{each Pareto edge $(U,V)\in\mathcal E_{\mathrm P}^c$}
    \STATE Order the endpoints such that $U_1\geq V_1$ and $U_2\leq V_2$.
    \STATE Compute $L_1=U_1-V_1$ and $L_2=V_2-U_2$.
    \IF{$L_1>0$ and $L_2>0$}

\STATE Compute \(\eta_{\mathrm e}\) from \eqref{eq:eta_segment_candidate}.
\STATE Compute \(\theta^\circ\) from \eqref{eq:theta_segment_candidate}.

        \IF{$0<\theta^\circ<1$}
            \STATE Add $q^\circ=(1-\theta^\circ)U+\theta^\circ V$ to $\mathcal C$.
        \ENDIF
    \ENDIF
\ENDFOR
\STATE Select $q^\star\in\arg\min_{q\in\mathcal C}F(q_1,q_2)$.
\STATE \textbf{Output:} The optimal coordinated update-rate vector $q^\star$ and a corresponding coordinated time-sharing policy.
\end{algorithmic}
\end{algorithm}

\begin{theorem}[Global optimality of Pareto-edge search]
\label{thm:pareto_edge_search_global}
Algorithm~\ref{alg:coordinated_pareto_edge_search} returns a global
minimizer of the problem \eqref{eq:coordinated_problem_q}.
\end{theorem}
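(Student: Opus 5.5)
The plan is to combine three earlier results. Corollary~\ref{cor:coordinated_pareto_localization} places every global minimizer on the Pareto frontier. Theorem~\ref{thm:two_mode_optimality} refines this: some global minimizer $q^\star$ lies either at a Pareto vertex or on a Pareto edge between adjacent Pareto vertices. Lemma~\ref{lem:pareto_segment_candidate} finds the minimizer of $F$ on each such edge.

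\emph{Existence of a minimizer.} By Lemma~\ref{lem:coordinated_polygon}, $\mathcal Q^c_\Gamma$ is compact. On $[0,1]^2$, $F$ is continuous, using the continuous extension of $E_i$ at $0$ from Remark~\ref{rem:zero_update_endpoint}, since $D_i>0$. Hence a global minimizer exists.

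\emph{The candidate set contains a global minimizer.} Since $\mathcal Q^c_\Gamma$ is a polygon, it has finitely many vertices and edges, so $\mathcal V_{\mathrm P}^c$ and $\mathcal E_{\mathrm P}^c$ are finite and the algorithm terminates. If $q^\star$ is a Pareto vertex, it is in $\mathcal C$ by initialization. Otherwise $q^\star$ lies in the relative interior of a Pareto edge $(U,V)$. The two endpoints are nondominated, so after ordering they satisfy $U_1\ge V_1$ and $U_2\le V_2$. Then $q^\star=q(\theta^\star)$ with $\theta^\star\in(0,1)$, and $\theta^\star$ minimizes the restricted $\phi$ over $[0,1]$.

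This is the case I expect to need the most care. First suppose $L_1=0$ or $L_2=0$. Then $\phi$ is monotone: one coordinate is constant and the other is affine in $\theta$, and $E_i$ is strictly decreasing. So $F$ at an endpoint is $\le F(q^\star)$, and that endpoint is already in $\mathcal C$. Now suppose $L_1,L_2>0$. Since $\phi$ is differentiable on $(0,1)$, $\phi'(\theta^\star)=0$. Following the proof of Lemma~\ref{lem:pareto_segment_candidate}, this gives the affine equation
\[
D_2(q_2(\theta))=\eta_{\mathrm e}D_1(q_1(\theta)),
\]
whose coefficient of $\theta$ is $\lambda_2L_2+\eta_{\mathrm e}\lambda_1L_1$.
\begin{itemize}
\item If this coefficient is nonzero, then $\theta^\star=\theta^\circ\in(0,1)$, and the algorithm adds $q^\circ=q^\star$ to $\mathcal C$.
\item If the coefficient is zero, the equation either has no solution, which contradicts the stationarity of $\theta^\star$, or holds identically. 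In the latter case $\phi'\equiv0$ on the edge, so $\phi$ is constant and an endpoint attains the same value.
\end{itemize}
The algorithm as written skips the degenerate coefficient, so this last case has to be argued explicitly rather than read off the lemma.

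\emph{Conclusion.} In every case $\mathcal C\subseteq\mathcal Q^c_\Gamma$ contains a point $\hat q$ with $F(\hat q)\le F(q^\star)=F_{\mathrm c}^\star(\Gamma)$. Because $\mathcal C$ is finite, the $\arg\min$ over $\mathcal C$ exists and attains the global optimum. It is realizable by a coordinated policy: a Pareto-extreme mode in the vertex case, or the convex combination $(1-\theta)\tau^U+\theta\tau^V$ constructed in the proof of Theorem~\ref{thm:two_mode_optimality} in the edge case.
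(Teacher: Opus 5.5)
Your proposal is correct and follows the same route as the paper's proof. Both combine Pareto localization (Corollary~\ref{cor:coordinated_pareto_localization}), the finiteness of the polygon's Pareto vertices and edges, and the endpoint-or-stationary-candidate characterization of Lemma~\ref{lem:pareto_segment_candidate} on each edge; your extra care about existence of a minimizer and about the degenerate case $\lambda_2L_2+\eta_{\mathrm e}\lambda_1L_1=0$ (where $\phi$ is strictly monotone or constant, so an endpoint suffices) fills steps the paper passes over.
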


\begin{IEEEproof}
By Corollary~\ref{cor:coordinated_pareto_localization}, a global
minimizer lies on the Pareto frontier of $\mathcal Q^c_\Gamma$. Since
$\mathcal Q^c_\Gamma$ is a polygon, this frontier consists of finitely
many vertices and edges. Algorithm~\ref{alg:coordinated_pareto_edge_search}
includes all Pareto vertices in the candidate set. On each Pareto edge,
Lemma~\ref{lem:pareto_segment_candidate} shows that the minimum is
attained either at an endpoint or at the feasible interior stationary
candidate. The endpoints are already included as Pareto vertices, and the
algorithm adds every feasible interior candidate. Therefore, minimizing
$F(q_1,q_2)$ over the constructed candidate set gives a global minimizer
over $\mathcal Q^c_\Gamma$, and hence over $\mathcal T_\Gamma$.
\end{IEEEproof}

The coordinated formulation therefore reduces the MPR scheduling
benchmark to a finite geometric search over a convex update-rate polygon.
\vspace{-12pt}

\section{Independent--Coordinated Gap}
\label{sec:coordination_gap}
\vspace{-3pt}

We compare independent randomized policies with coordinated time sharing under the same sampling budgets. Because independent randomization restricts joint-action probabilities to a product form, the independent class is a subset of the coordinated class.

\begin{lemma}[Embedding of independent policies into coordinated policies]
\label{lem:independent_embedded_coordinated}
For every feasible independent policy in $\mathcal A_\Gamma$, there
exists a feasible coordinated policy in $\mathcal T_\Gamma$ that induces
the same update-rate pair $(q_1,q_2)$. Consequently,
\begin{equation}
\mathcal Q_\Gamma \subseteq \mathcal Q^c_\Gamma .
\label{eq:Qind_subset_Qc}
\end{equation}
\end{lemma}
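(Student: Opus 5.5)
The natural construction is the product coupling. Given a feasible independent policy $(u,v)\in\mathcal A_\Gamma$, with $u_0=1-u_1-u_2$ and $v_0=1-v_1-v_2$, define
\[
\tau_{rs}\triangleq u_r v_s,\qquad r,s\in\{0,1,2\}.
\]
We then check three things: that $\tau\in\mathcal T_\Gamma$, that $\tau$ induces the same pair $(q_1,q_2)$, and that the set inclusion follows.

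\emph{Feasibility.} Nonnegativity is immediate, since every $u_r,v_s\ge0$. The simplex constraint holds because
\[
\sum_{r,s}\tau_{rs}=\Big(\sum_r u_r\Big)\Big(\sum_s v_s\Big)=1\cdot1=1.
\]
For the budget constraints, the first-sensor marginal is
\[
\sum_{r=1}^2\sum_{s=0}^2\tau_{rs}=(u_1+u_2)\sum_s v_s=u_1+u_2\le\Gamma_1,
\]
and the second-sensor marginal is $v_1+v_2\le\Gamma_2$ by the same argument. Hence $\tau\in\mathcal T_\Gamma$.

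\emph{Same update rates.} Substitute $\tau_{rs}=u_rv_s$ into \eqref{eq:q1_tau}--\eqref{eq:q2_tau}. For $q_1$ this gives
\[
s_1u_1v_0+s_2u_0v_1+c\,u_1v_1+a\,u_1v_2+b\,u_2v_1,
\]
which is exactly \eqref{eq:q1_explicit_journal} under the identification \eqref{eq:uv_def_ind}. The same substitution in \eqref{eq:q2_tau} reproduces \eqref{eq:q2_explicit_journal}.

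This term-by-term match is the only step that needs care, because the two formulas were written with different index conventions. In particular, $\tau_{21}$ has sensor~1 sending source~2 and sensor~2 sending source~1. Its vector is $d^{21}=(b,a)$, so it contributes $b$ to $q_1$ and $a$ to $q_2$. This agrees with the $b\,a_{1,2}a_{2,1}$ term in $q_1$ and the $a\,a_{1,2}a_{2,1}$ term in $q_2$. I will check each of the nine joint actions against the corresponding vector in \eqref{eq:joint_action_vectors} and its coefficient in the explicit formulas.

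\emph{Inclusion.} Every point of $\mathcal Q_\Gamma$ has the form $T_\Gamma(u,v)$ for some feasible $(u,v)$. That point equals $(q_1(\tau),q_2(\tau))$ for the feasible $\tau$ constructed above, so it lies in $\mathcal Q^c_\Gamma$. This gives \eqref{eq:Qind_subset_Qc}.
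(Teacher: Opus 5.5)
Your proof is correct and is essentially the paper's argument: the product coupling $\tau_{rs}=a_{1,r}a_{2,s}$, feasibility from the factorized simplex and marginal budget sums, and term-by-term substitution into \eqref{eq:q1_tau}--\eqref{eq:q2_tau}. The only cosmetic difference is that you match against the explicit forms \eqref{eq:q1_explicit_journal}--\eqref{eq:q2_explicit_journal} rather than the bilinear forms \eqref{eq:q1_bilinear_ind}--\eqref{eq:q2_bilinear_ind}, which are algebraically identical, and your check of $d^{12}=(a,b)$ and $d^{21}=(b,a)$ is consistent with both.
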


\begin{IEEEproof}
Take any feasible independent policy
$(u_1,u_2,v_1,v_2)\in\mathcal A_\Gamma$ and define
\[
a_{1,0}=1-u_1-u_2,\qquad a_{1,1}=u_1,\qquad a_{1,2}=u_2,
\]
and
\[
a_{2,0}=1-v_1-v_2,\qquad a_{2,1}=v_1,\qquad a_{2,2}=v_2.
\]
Construct a coordinated policy by
\begin{equation}
\tau_{rs}=a_{1,r}a_{2,s},
\qquad r,s\in\{0,1,2\}.
\label{eq:tau_product_embedding}
\end{equation}
Then $\tau_{rs}\geq0$ and
$\sum_{r=0}^{2}\sum_{s=0}^{2}\tau_{rs}=\big(\sum_{r=0}^{2}a_{1,r}\big)\big(\sum_{s=0}^{2}a_{2,s}\big)=1$.
Moreover,
$\sum_{r=1}^{2}\sum_{s=0}^{2}\tau_{rs}=a_{1,1}+a_{1,2}=u_1+u_2\leq \Gamma_1$
and
$\sum_{r=0}^{2}\sum_{s=1}^{2}\tau_{rs}=a_{2,1}+a_{2,2}=v_1+v_2\leq \Gamma_2$.
Hence $\tau\in\mathcal T_\Gamma$.
Finally, substituting
\eqref{eq:tau_product_embedding} into
\eqref{eq:q1_tau}--\eqref{eq:q2_tau} gives exactly the independent
update probabilities in
\eqref{eq:q1_bilinear_ind}--\eqref{eq:q2_bilinear_ind}. Therefore, every
update-rate pair achievable by an independent policy is achievable
by a coordinated policy.
\end{IEEEproof}

\begin{theorem}[Coordinated lower bound]
\label{thm:coordinated_lower_bound}
The coordinated optimum is a lower bound on the independent randomized
optimum
\begin{equation}
F_{\mathrm c}^{\star}(\Gamma)
\leq
F_{\mathrm{ind}}^{\star}(\Gamma).
\label{eq:coordinated_lower_bound}
\end{equation}
\end{theorem}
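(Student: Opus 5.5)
The plan is to derive the inequality directly from the set inclusion in Lemma~\ref{lem:independent_embedded_coordinated}, using the update-rate reformulations of both problems. By Lemma~\ref{lem:update_rate_equivalence_journal}, we have
\[
F_{\mathrm{ind}}^\star(\Gamma)=\min_{(q_1,q_2)\in\mathcal Q_\Gamma}F(q_1,q_2).
\]
By \eqref{eq:coordinated_problem_q}, we have
\[
F_{\mathrm c}^\star(\Gamma)=\min_{(q_1,q_2)\in\mathcal Q^c_\Gamma}F(q_1,q_2).
\]
Since $\mathcal Q_\Gamma\subseteq\mathcal Q^c_\Gamma$ by \eqref{eq:Qind_subset_Qc}, the coordinated minimization runs over a superset of the independent feasible region. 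A minimum over a larger set cannot exceed a minimum over a subset, which gives \eqref{eq:coordinated_lower_bound}.

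To make this rigorous, I would first take an independent minimizer $(u^\star,v^\star)\in\mathcal A_\Gamma$, or an infimizing sequence if attainment is in doubt. Next, I would embed it via the product construction \eqref{eq:tau_product_embedding} to obtain $\tau^\star\in\mathcal T_\Gamma$ inducing the same $(q_1,q_2)$. This yields
\[
F_{\mathrm c}^\star(\Gamma)\le F(q_1(\tau^\star),q_2(\tau^\star))=F_{\mathrm{ind}}^\star(\Gamma).
\]

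The only delicate point is that the minima are attained and that $F$ is well defined on both regions, including where some $q_i=0$. For this I would rely on three facts:
\begin{itemize}
\item $\mathcal A_\Gamma$ is compact and $T_\Gamma$ is continuous, so $\mathcal Q_\Gamma$ is compact.
\item $\mathcal Q^c_\Gamma$ is compact by Lemma~\ref{lem:coordinated_polygon}.
\item $F$ is continuous on $[0,1]^2$ under the continuous extension at $q_i=0$ in Remark~\ref{rem:zero_update_endpoint}.
\end{itemize}
Weierstrass then guarantees both minima exist. Even without attainment, the inequality holds for infima by the same subset argument, so I expect no real obstacle.
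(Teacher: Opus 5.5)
Your proposal is correct and follows essentially the same route as the paper: both rewrite the two problems in the update-rate plane, invoke the inclusion $\mathcal Q_\Gamma\subseteq\mathcal Q^c_\Gamma$ from Lemma~\ref{lem:independent_embedded_coordinated}, and conclude that a minimum over a superset cannot exceed a minimum over a subset. Your extra remarks on attainment (compactness of both regions, and continuity of $F$ on $[0,1]^2$ via the extension at $q_i=0$) fill in a point the paper takes for granted, though, as you note, the inequality already holds for infima without them.
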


\begin{IEEEproof}
By Lemma~\ref{lem:independent_embedded_coordinated},
$\mathcal Q_\Gamma\subseteq\mathcal Q^c_\Gamma$. Since both optimization
problems minimize the same objective $F(q_1,q_2)$ over their respective
update-rate regions, minimizing over the larger set
$\mathcal Q^c_\Gamma$ cannot give a larger value than minimizing over
the smaller set $\mathcal Q_\Gamma$. Hence
\eqref{eq:coordinated_lower_bound} follows.
\end{IEEEproof}

The independent--coordinated gap is defined as
\begin{equation}
\Delta_\Gamma
\triangleq
F_{\mathrm{ind}}^{\star}(\Gamma)
-
F_{\mathrm c}^{\star}(\Gamma).
\label{eq:coordination_gap_def}
\end{equation}
By Theorem~\ref{thm:coordinated_lower_bound}, $\Delta_\Gamma\geq0$.
This gap quantifies the performance loss caused by restricting the two
sensors to act independently.

\begin{corollary}[Zero-gap condition]
\label{cor:zero_gap_condition}
The gap satisfies $\Delta_\Gamma=0$ if and only if at least one
coordinated optimal update-rate point is also achievable by a feasible
independent randomized policy, i.e.,
\begin{equation}
\arg\min_{q\in\mathcal Q^c_\Gamma}F(q_1,q_2)
\cap
\mathcal Q_\Gamma
\neq\emptyset.
\label{eq:zero_gap_condition}
\end{equation}
\end{corollary}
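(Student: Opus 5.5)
The plan is to prove both directions directly from the definitions of $F_{\mathrm{ind}}^\star(\Gamma)$ and $F_{\mathrm c}^\star(\Gamma)$ in the update-rate plane, using Lemma~\ref{lem:update_rate_equivalence_journal} for the independent problem and Lemma~\ref{lem:independent_embedded_coordinated} for the inclusion $\mathcal Q_\Gamma\subseteq\mathcal Q^c_\Gamma$.

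First we need that both minima are attained, so that the $\arg\min$ sets are nonempty and the gap is a difference of attained values. The map $T_\Gamma$ is continuous (bilinear) on the compact set $\mathcal A_\Gamma$, so $\mathcal Q_\Gamma$ is compact. The set $\mathcal Q^c_\Gamma$ is compact by Lemma~\ref{lem:coordinated_polygon}. The objective $F$ is continuous on $[0,1]^2$, including at $q_i=0$ through the continuous extension of Remark~\ref{rem:zero_update_endpoint}. Both problems therefore have minimizers.

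\emph{Sufficiency.} Suppose some $q^\star\in\arg\min_{\mathcal Q^c_\Gamma}F$ also belongs to $\mathcal Q_\Gamma$. Then $F_{\mathrm{ind}}^\star(\Gamma)\le F(q^\star)=F_{\mathrm c}^\star(\Gamma)$. Combined with Theorem~\ref{thm:coordinated_lower_bound}, this gives $\Delta_\Gamma=0$.

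\emph{Necessity.} Suppose $\Delta_\Gamma=0$. Let $\hat q\in\mathcal Q_\Gamma$ be an independent minimizer, which exists by attainment. Then $F(\hat q)=F_{\mathrm{ind}}^\star(\Gamma)=F_{\mathrm c}^\star(\Gamma)$. Since $\hat q\in\mathcal Q^c_\Gamma$ by \eqref{eq:Qind_subset_Qc}, $\hat q$ attains the coordinated minimum, so it lies in the intersection \eqref{eq:zero_gap_condition}.

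The only nontrivial point is attainment, and in particular the continuity of $F$ at the boundary $q_i=0$. This is handled by the continuous-extension convention together with $D_i(q_i)>0$ on $[0,1]$. Everything else follows directly from the definitions.
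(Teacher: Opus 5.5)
Your proof is correct and follows the same argument as the paper. Sufficiency comes from feasibility of a coordinated optimizer in $\mathcal Q_\Gamma$ combined with Theorem~\ref{thm:coordinated_lower_bound}, and necessity from observing that an independent minimizer lies in $\mathcal Q^c_\Gamma$ by \eqref{eq:Qind_subset_Qc} and therefore attains the coordinated minimum; your explicit compactness-and-continuity check that both minima are attained (including at $q_i=0$ via Remark~\ref{rem:zero_update_endpoint}) is a detail the paper leaves implicit, and you handle it correctly.
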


\begin{IEEEproof}
If a coordinated optimal point belongs to $\mathcal Q_\Gamma$, then the
independent problem can achieve the coordinated optimal value, so
$F_{\mathrm{ind}}^{\star}(\Gamma)=F_{\mathrm c}^{\star}(\Gamma)$.
Conversely, if $\Delta_\Gamma=0$, then the minimum value over
$\mathcal Q_\Gamma$ equals the minimum value over $\mathcal Q^c_\Gamma$.
Since $\mathcal Q_\Gamma\subseteq\mathcal Q^c_\Gamma$, an independent
optimal point is also coordinated-optimal, proving
\eqref{eq:zero_gap_condition}.
\end{IEEEproof}

\begin{remark}
The condition in Corollary~\ref{cor:zero_gap_condition} is expressed in
the update-rate plane. It does not require the coordinated optimizer
itself to have a product-form representation in $\tau$; it only requires
that the same optimal update-rate pair can be generated by some feasible
independent policy. If no independent policy can generate any coordinated
optimal update-rate point, then the gap is strictly positive.
\end{remark}

Thus, the coordinated formulation provides a benchmark for the best
sampling-constrained MPR performance, while
$\Delta_\Gamma$ measures the cost of decentralized independent
randomization.
\vspace{-13pt}

\section{Numerical Results}
\label{sec:numerical_results}
\vspace{-4pt}

We evaluate the proposed independent and coordinated MPR sampling schemes using the parameter set in Table~\ref{tab:numerical_parameters}.

\begin{table}[t]
\centering
\caption{Default numerical parameters.}
\label{tab:numerical_parameters}
\begin{tabular}{c|c}
\hline
Parameter & Value \\
\hline
$(\alpha_1,\beta_1)$ & $(0.52,0.18)$ \\
$(\alpha_2,\beta_2)$ & $(0.45,0.22)$ \\
$(\lambda_1,\lambda_2)$ & $(0.30,0.33)$ \\
$(s_1,s_2)$ & $(0.92,0.82)$ \\
$(a,b)$ & $(0.58,0.50)$ \\
$c$  & $0.79$ \\
$(\Gamma_1,\Gamma_2)$ & $(0.65,0.65)$ \\
$(w_1,w_2)$ & $(0.5,0.5)$ \\
\hline
\end{tabular}
\end{table}
\subsection{Geometry of the Update-Rate Regions}
\label{subsec:numerical_update_rate_geometry}

Fig.~\ref{fig:update_rate_geometry_numerical} illustrates the achievable
update-rate regions under the parameters in Table~\ref{tab:numerical_parameters}. Since the objective is decreasing in both
\(q_1\) and \(q_2\), the optimum lies on the Pareto frontier. The
coordinated region contains the independent region and therefore reaches
a lower objective contour, showing the benefit of enlarging the feasible
update-rate tradeoff through coordination.

\begin{figure}[t]
    \centering
    \includegraphics[width=0.65\columnwidth]{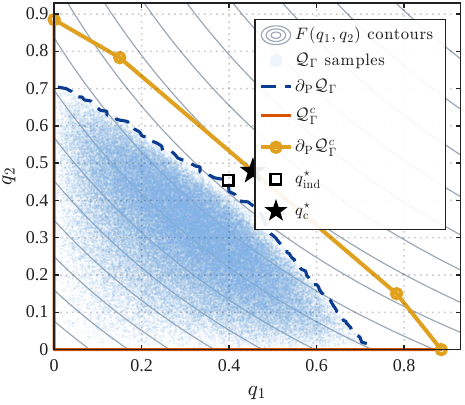}
    \caption{Achievable update-rate regions under sampling constraints.}
    \label{fig:update_rate_geometry_numerical}
\end{figure}
\vspace{-10px}
\subsection{Regimes and Dominant Frontiers}
We next illustrate the budget-aware total-success envelope
$B_\Gamma=\max\{\Gamma_1s_1,\Gamma_2s_2,B_{\rm coop}\}$
which separates the independent update-rate geometry into different
structural regimes. This
experiment uses the same parameters in
Table~\ref{tab:numerical_parameters},
but reduces the simultaneous-transmission probabilities to
$a=0.12$ and $b=0.10$. Fig.~\ref{fig:envelope_regimes} confirms the three cases predicted by the
analysis. In the sensor-1 dominant region, the largest envelope value is
$\Gamma_1s_1$, and the Pareto frontier is generated by allocating the
available updates through sensor~1. In the sensor-2 dominant region, the
largest value is $\Gamma_2s_2$, and the symmetric dominant-frontier result
applies. In the cooperative region, the largest value is $B_{\rm coop}$,
which means that using both budgets for simultaneous different-source
transmissions provides the largest total-success potential.

\begin{figure}[t]
    \centering
    \includegraphics[width=0.75\columnwidth]{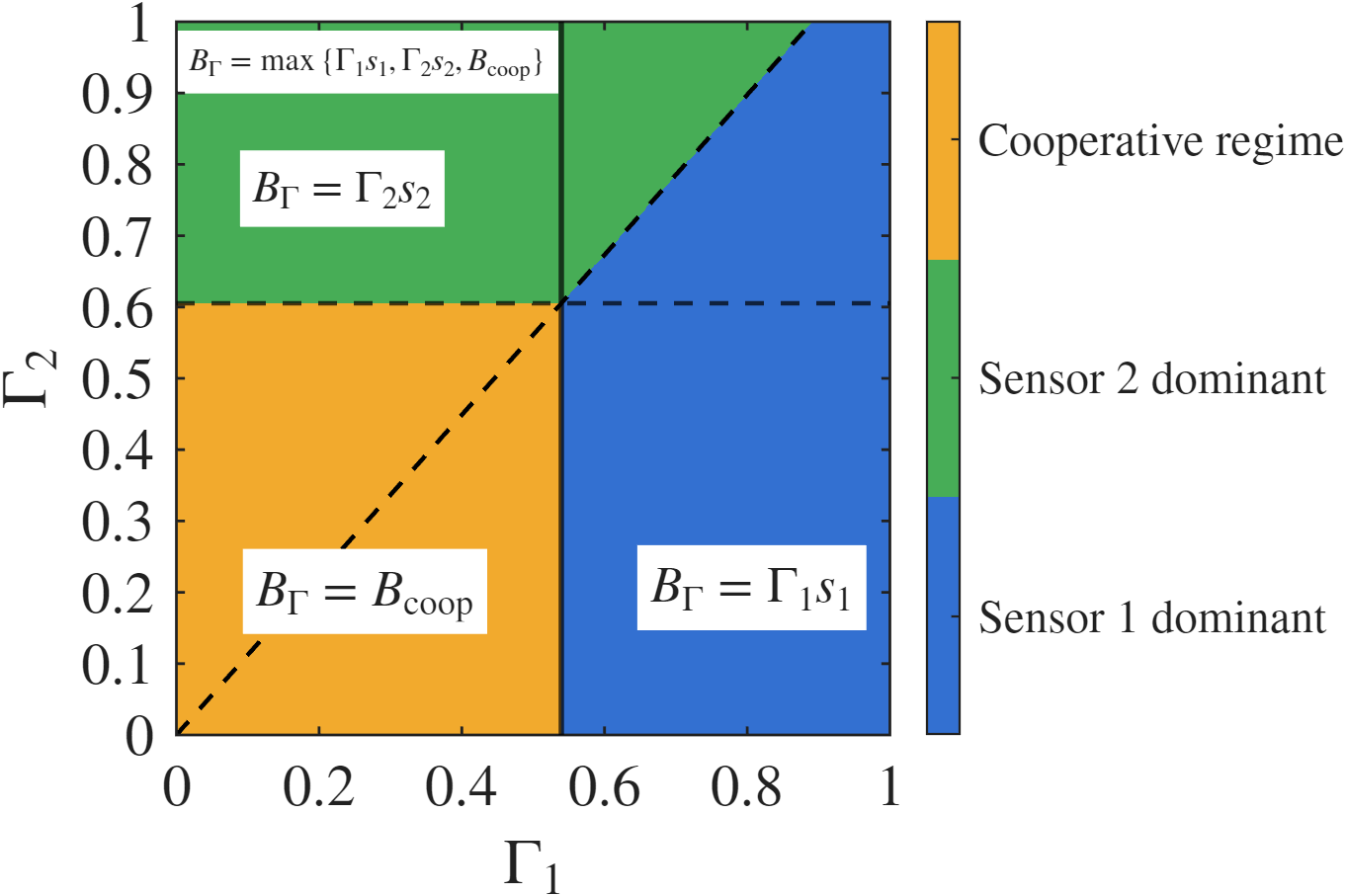}
    \caption{Regimes induced by the budget-aware 
    bound. }
    \label{fig:envelope_regimes}
\end{figure}

\vspace{-10pt}
\subsection{Policy-Class Comparison Across MPR Regimes and Sampling Budgets}
\label{subsec:mpr_regime_comparison}

We compare the considered policy classes under three physical-layer regimes:
collision, capture, and MPR. The regimes differ only in the simultaneous-transmission
success probabilities. In the collision channel, no packet is decoded under
simultaneous transmission, i.e.,
\(a=b=0, c=0.\)
In the capture channel, only one of the two simultaneous packets can be decoded,
with
\(a=0.58,\ b=0,\ \text{and } c=0.58.\)
In the MPR channel, both simultaneous packets can be decoded with nonzero
probability, with
\(a=0.58,\ b=0.50,\ \text{and } c=0.79.\)

Both the collision and capture channels satisfy
$a+b=c$. Hence, the boundary-case argument in
Lemma~\ref{lem:boundary_fixed_D} applies, including the possible
degeneracy $A=|B|$. Consequently,
Theorem~\ref{thm:no_fully_mixed_pareto} and the four-branch reduction
in \eqref{eq:four_branches} remain valid for both regimes. The optimized
independent collision and capture policies reported below are therefore
obtained by minimizing over the same four boundary branches used for
the general MPR channel. We compare four policy classes. The uniform independent baseline assigns each
sensor's sampling budget equally to the two sources, i.e.,
\(u_1=u_2=\Gamma_1/2\) and \(v_1=v_2=\Gamma_2/2\), without coordination
between the sensors. The optimized TDMA baseline is the best single-active
coordinated policy, where time sharing is restricted to the five joint actions
\((0,0)\), \((1,0)\), \((2,0)\), \((0,1)\), and \((0,2)\), so simultaneous
transmissions are not allowed. Equivalently, the sensors are orthogonalized in
time and the long-term fractions \(\rho_{11},\rho_{12},\rho_{21},\rho_{22}\)
are optimized, where \(\rho_{ki}\) denotes the fraction of slots in which
sensor \(k\) transmits source \(X_i\). These fractions satisfy
\(\rho_{11}+\rho_{12}\leq\Gamma_1,\qquad \rho_{21}+\rho_{22}\leq\Gamma_2\)
together with the TDMA orthogonality constraint
\(\rho_{11}+\rho_{12}+\rho_{21}+\rho_{22}\leq 1\).
The optimized independent
MPR policy solves the independent randomized problem over
\((u_1,u_2,v_1,v_2)\) using the MPR update-rate expressions. Finally, the
coordinated MPR benchmark allows time sharing over all nine joint actions,
including simultaneous transmissions.

\begin{figure*}[t]
    \centering
    \includegraphics[width=0.55\textwidth]{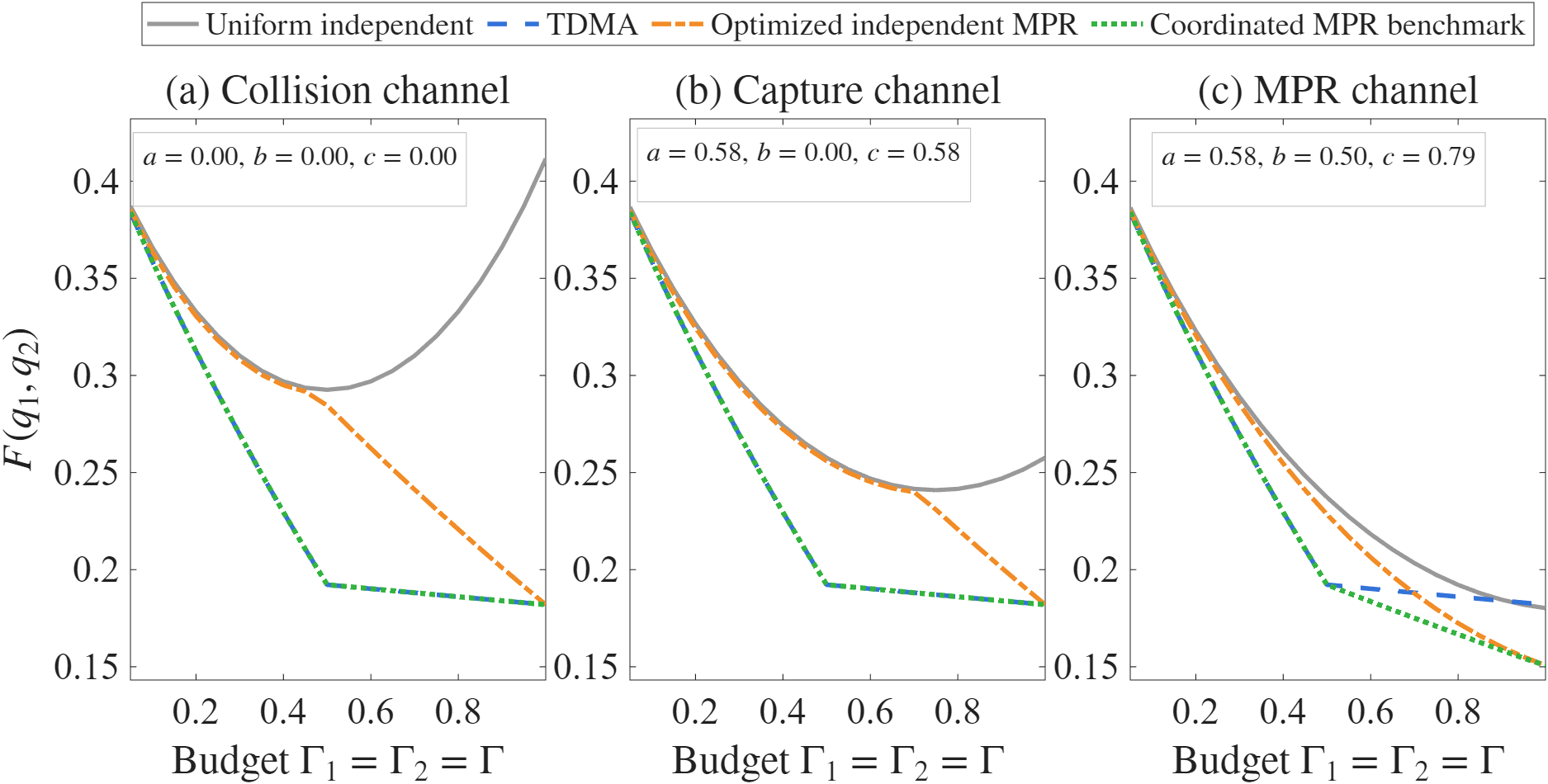}
    \caption{Policy comparison across collision, capture, and MPR regimes versus
    common sampling budget.}
    \label{fig:mpr_regime_budget_comparison}
\end{figure*}

Fig.~\ref{fig:mpr_regime_budget_comparison} studies the effect of the
common sampling budget \(\Gamma_1=\Gamma_2=\Gamma\). In the collision channel,
the coordinated MPR benchmark coincides with TDMA because overlapping
transmissions do not yield useful receptions, while the uniform independent
policy degrades at large budgets due to the increased number of collisions.
The capture channel shows a similar behavior: although one packet may be
decoded during simultaneous transmission, the resulting update-rate tradeoff
does not improve the task-oriented objective over optimized single-active
scheduling. In the MPR channel, both the optimized independent MPR policy and
the coordinated MPR benchmark outperform TDMA at larger budgets. At small
budgets, overlap is rarely needed and the policies remain close. As the budget
increases, single-active access cannot fully exploit the available sampling
opportunities, whereas MPR can use simultaneous transmissions. Thus, the
benefit of MPR appears only when the physical layer supports useful concurrent
reception for both sources.

\vspace{-10pt}
\section{Conclusion}
\vspace{-6pt}

\label{sec:conclusion}
We studied the real-time reconstruction and remote actuation of two binary Markov  sources over a shared MPR channel. We derived closed-form expressions for the RTE and CAE under synchronize-or-hold estimation and showed that, for the considered binary-source model, CAE minimization reduces to a weighted RTE minimization problem. We then characterized the sampling-constrained independent randomized problem through the geometry of the achievable update-rate region. The analysis shows that, although the MPR-induced update-rate map is bilinear and the optimization is nonconvex, Pareto-efficient independent policies can be searched through a finite set of one-dimensional boundary branches. We also introduced a coordinated time-sharing benchmark and proved that an optimal coordinated policy requires time sharing between at most two Pareto-extreme modes. Numerical results show that MPR does not automatically improve the task-oriented objective. In collision and capture channel, simultaneous transmissions provide no gain over single-active scheduling. A clear improvement appears only when concurrent reception is sufficiently reliable for both sources, especially at larger sampling budgets where orthogonal access cannot fully exploit the available transmission opportunities.
\vspace{-11pt}

\appendices

\section{Branch-Level Update-Rate Expressions}
\label{app:branch_expressions}
\vspace{-5pt}

This appendix collects the detailed update-rate expressions for the
four boundary branches in \eqref{eq:four_branches}, which underlie the
branch-level Pareto reduction of Section~\ref{sec:pareto_reduction}.

Theorem~\ref{thm:no_fully_mixed_pareto} reduces the search over the
conditional source-selection square $(p,q)\in[0,1]^2$ to the four
boundary branches. On each branch, one sensor is source-deterministic, while the other
sensor may still randomize over its constrained action.

For each branch, let $P^{(\cdot)}_\ell(\cdot)\in\mathbb R^2$,
$\ell\in\{0,1,2\}$, denote the update-rate vector $(q_1,q_2)$ induced
by the $\ell$th constrained vertex action of the non-deterministic
sensor. In the branches $p=1$ and $p=0$, the fixed intensity is denoted
by $x\in[0,\Gamma_1]$, and $\ell=0,1,2$ correspond to the sensor~2 vertices are \((v_1,v_2)=(0,0),\quad (\Gamma_2,0),\quad (0,\Gamma_2)\).
In the branches $q=1$ and $q=0$, the fixed intensity is denoted by
$z\in[0,\Gamma_2]$, and $\ell=0,1,2$ correspond to the sensor. The vertices are \((u_1,u_2)=(0,0),\quad (\Gamma_1,0),\quad (0,\Gamma_1)\).

\begin{lemma}[Convex-hull representation on a fixed branch]
\label{lem:branch_convex_hull}
Fix one of the four boundary branches in \eqref{eq:four_branches} and
fix the transmission intensity of the sensor with deterministic source selection. The
update-rate vector induced by any feasible randomized action of the
other sensor belongs to the convex hull of the three
vertex-induced update-rate vectors.
\end{lemma}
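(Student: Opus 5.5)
The proof rests on one observation: once one sensor's conditional source selection and transmission intensity are fixed, the induced pair $(q_1,q_2)$ is an affine function of the other sensor's non-idle action probabilities. I would treat the branch $p=1$ in detail and then note that the other three branches follow by the same computation with sources and/or sensors relabeled.

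\emph{Step 1: affinity in the free sensor's action.} On $p=1$ with intensity $x$, we have $u_1=x$ and $u_2=0$. Substituting into \eqref{eq:q1_bilinear_ind}--\eqref{eq:q2_bilinear_ind} gives
\[
q_1=s_1x+(s_2+\gamma_{11}x)v_1+\gamma_{12}x\,v_2,\qquad
q_2=(s_2+\gamma_{21}x)v_2 .
\]
Both expressions are affine in $(v_1,v_2)$. The same holds on every branch: the only nonlinearity in \eqref{eq:q1_bilinear_ind}--\eqref{eq:q2_bilinear_ind} consists of products $u_jv_\ell$, and fixing $(u_1,u_2)$ (respectively $(v_1,v_2)$) makes every such product linear in the remaining variables. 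Hence the map $M:(v_1,v_2)\mapsto(q_1,q_2)$ is affine on $\mathbb R^2$.

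\emph{Step 2: the feasible set is a triangle.} The feasible set for the free sensor is $\Delta_{\Gamma_2}$ from \eqref{eq:Delta_gamma_def}. This is the triangle with vertices $(0,0)$, $(\Gamma_2,0)$, and $(0,\Gamma_2)$, i.e.\ $\Delta_{\Gamma_2}=\operatorname{conv}\{(0,0),(\Gamma_2,0),(0,\Gamma_2)\}$. Explicitly, any feasible $(v_1,v_2)$ can be written as
\[
(v_1,v_2)=\Big(1-\tfrac{v_1+v_2}{\Gamma_2}\Big)(0,0)+\tfrac{v_1}{\Gamma_2}(\Gamma_2,0)+\tfrac{v_2}{\Gamma_2}(0,\Gamma_2).
\]
The three weights are nonnegative because $v_1,v_2\ge0$ and $v_1+v_2\le\Gamma_2$, and they sum to one.

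\emph{Step 3: affine maps preserve convex combinations.} An affine map satisfies $M(\sum_\ell \mu_\ell y_\ell)=\sum_\ell\mu_\ell M(y_\ell)$ whenever $\sum_\ell\mu_\ell=1$. Applying this to the decomposition in Step 2 gives
\[
M(v_1,v_2)=\mu_0P_0+\mu_1P_1+\mu_2P_2,
\]
where $P_\ell$ is the image of the $\ell$th vertex and $\mu_0,\mu_1,\mu_2$ are the weights from Step 2. This places the induced update-rate vector in $\operatorname{conv}\{P_0,P_1,P_2\}$, which is the claim.

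\emph{Step 4: degenerate cases and remaining branches.} The case $\Gamma_k=0$ cannot occur, since $\Gamma_k\in(0,1]$. The case $x=0$ is harmless, because the map remains affine. The other three branches fix $u=(0,x)$, $v=(z,0)$, or $v=(0,z)$, and the identical argument applies on $\Delta_{\Gamma_1}$ or $\Delta_{\Gamma_2}$ as appropriate.

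The only point needing care is to confirm that fixing the branch genuinely removes all bilinearity. This holds because every bilinear term in \eqref{eq:q1_bilinear_ind}--\eqref{eq:q2_bilinear_ind} pairs a sensor-1 variable with a sensor-2 variable, so fixing either sensor's variables leaves expressions that are linear in the other's.
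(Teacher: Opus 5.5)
Your proposal is correct and matches the paper's proof step for step. It uses the same three ingredients: affinity of $(q_1,q_2)$ in the free sensor's variables once the source-deterministic sensor is fixed; the barycentric decomposition of $\Delta_{\Gamma_k}$ with weights $1-(v_1+v_2)/\Gamma_2$, $v_1/\Gamma_2$, $v_2/\Gamma_2$; and the fact that affine maps preserve convex combinations. Your explicit branch-$p=1$ formulas and your remark that $\Gamma_k>0$ rules out division by zero are correct, and the paper leaves both implicit.
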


\begin{IEEEproof}
On a fixed branch, the policy variables of the source-deterministic
sensor are fixed. Hence, by
\eqref{eq:q1_bilinear_ind}--\eqref{eq:q2_bilinear_ind}, the mapping from
the other sensor's two action probabilities to $(q_1,q_2)$ is affine.

Consider the case where sensor~2 is the non-deterministic sensor. Any
feasible $(v_1,v_2)$ satisfying $v_1,v_2\ge0$ and
$v_1+v_2\le\Gamma_2$ can be written as
\begin{equation}
(v_1,v_2)
=
\mu_0(0,0)+\mu_1(\Gamma_2,0)+\mu_2(0,\Gamma_2),
\label{eq:v_convex_branch}
\end{equation}
where
\begin{equation}
\mu_0=1-\frac{v_1+v_2}{\Gamma_2},\qquad
\mu_1=\frac{v_1}{\Gamma_2},\qquad
\mu_2=\frac{v_2}{\Gamma_2}.
\label{eq:v_mu_branch}
\end{equation}
The coefficients are nonnegative and satisfy
$\mu_0+\mu_1+\mu_2=1$. Since affine mappings preserve convex
combinations, the induced update-rate vector satisfies
\begin{equation}
(q_1,q_2)
=
\mu_0P^{(\cdot)}_0+\mu_1P^{(\cdot)}_1+\mu_2P^{(\cdot)}_2.
\label{eq:generic_convex_P}
\end{equation}
Thus, it belongs to the convex hull of the three vertex-induced
update-rate vectors. The same argument applies when sensor~1 is the
non-deterministic sensor, using the constrained vertices
$(0,0)$, $(\Gamma_1,0)$, and $(0,\Gamma_1)$.
\end{IEEEproof}

\subsubsection{Branch $p=1$}

In this branch,
\begin{equation}
u_1=x,\qquad u_2=0,\qquad 0\le x\le\Gamma_1.
\label{eq:branch_p1_policy}
\end{equation}
According to the vertex indexing defined above, the three update-rate
vectors are
\begin{align}
P^{(p=1)}_0(x)
&=
(s_1x,0),
\label{eq:Pp1_0}\\
P^{(p=1)}_1(x)
&=
\big(s_1x(1-\Gamma_2)+s_2(1-x)\Gamma_2+cx\Gamma_2,\;0\big),
\label{eq:Pp1_1}\\
P^{(p=1)}_2(x)
&=
\big(s_1x(1-\Gamma_2)+ax\Gamma_2,\;
s_2(1-x)\Gamma_2+bx\Gamma_2\big).
\label{eq:Pp1_2}
\end{align}
By Lemma~\ref{lem:branch_convex_hull}, for fixed $x$,
\begin{equation}
(q_1,q_2)\in
\operatorname{conv}
\left\{
P^{(p=1)}_0(x),
P^{(p=1)}_1(x),
P^{(p=1)}_2(x)
\right\}.
\label{eq:p1_convex_hull}
\end{equation}
Since $P^{(p=1)}_0(x)$ and $P^{(p=1)}_1(x)$ lie on the $q_1$-axis, only
their coordinate-wise dominant point is Pareto relevant. Define

\begin{equation}
\begin{aligned}
A^{(p=1)}(x)
&=
\Big(
\max\big\{
\big[P^{(p=1)}_{0}(x)\big]_1,
\big[P^{(p=1)}_{1}(x)\big]_1
\big\},\,0
\Big),\\
B^{(p=1)}(x)
&=
P^{(p=1)}_2(x).
\end{aligned}
\label{eq:Ap1_def}
\end{equation}

\subsubsection{Branch $p=0$}

Here sensor~1 transmits source $X_2$ whenever it is active
\begin{equation}
u_1=0,\qquad u_2=x,\qquad 0\leq x\leq\Gamma_1.
\label{eq:branch_p0_policy}
\end{equation}
For fixed $x$, the three constrained vertices of sensor~2 induce
\begin{align}
P^{(p=0)}_0(x)
&=
(0,s_1x),
\label{eq:Pp0_0}\\
P^{(p=0)}_1(x)
&=
\big(s_2(1-x)\Gamma_2+bx\Gamma_2,\;
s_1x(1-\Gamma_2)+ax\Gamma_2\big),
\label{eq:Pp0_1}\\
P^{(p=0)}_2(x)
&=
\big(0,\;
s_1x(1-\Gamma_2)+s_2(1-x)\Gamma_2+cx\Gamma_2
\big).
\label{eq:Pp0_2}
\end{align}
The first and third points lie on the $q_2$-axis. Define

\begin{equation}
\begin{aligned}
A^{(p=0)}(x)
&=
\Big(
0,\,
\max\big\{
\big[P^{(p=0)}_{0}(x)\big]_2,
\big[P^{(p=0)}_{2}(x)\big]_2
\big\}
\Big),\\
B^{(p=0)}(x)
&=
P^{(p=0)}_1(x).
\end{aligned}
\label{eq:Ap0_def}
\end{equation}

\subsubsection{Branch $q=1$}

Here sensor~2 transmits source $X_1$ whenever it is active
\begin{equation}
v_1=z,\qquad v_2=0,\qquad 0\leq z\leq\Gamma_2.
\label{eq:branch_q1_policy}
\end{equation}
For fixed $z$, the three constrained vertices of sensor~1 induce
\begin{align}
P^{(q=1)}_0(z)
&=
(s_2z,0),
\label{eq:Pq1_0}\\
P^{(q=1)}_1(z)
&=
\big(s_2z(1-\Gamma_1)+s_1(1-z)\Gamma_1+c\Gamma_1z,\;0\big),
\label{eq:Pq1_1}\\
P^{(q=1)}_2(z)
&=
\big(s_2z(1-\Gamma_1)+b\Gamma_1z,\;
s_1(1-z)\Gamma_1+a\Gamma_1z\big).
\label{eq:Pq1_2}
\end{align}
The first two points lie on the $q_1$-axis. Define
\begin{equation} \begin{aligned} A^{(q=1)}(z) &= \Big( \max\big\{ \big[P^{(q=1)}_{0}(z)\big]_1, \big[P^{(q=1)}_{1}(z)\big]_1 \big\},\,0 \Big),\\ B^{(q=1)}(z) &= P^{(q=1)}_2(z). \end{aligned} \label{eq:Aq1_def} \end{equation}
\subsubsection{Branch \(q=0\)} This branch is symmetric to the branch \(q=1\) under the interchange of the source labels \(1\) and \(2\). In particular, sensor~2 transmits source \(X_2\) whenever it is active, i.e., \(v_1=0\) and \(v_2=z\), with \(0\le z\le \Gamma_2\). The corresponding vertex-induced points are obtained from \eqref{eq:Pq1_0}--\eqref{eq:Aq1_def} by swapping the two update-rate coordinates and interchanging the roles of source \(X_1\) and source \(X_2\).

\vspace{-12pt}
\bibliographystyle{IEEEtran}
\bibliography{Refs_cleaned}

\end{document}